\documentclass[runningheads]{llncs}

%---------- Packages ----------
\usepackage{microtype}
\usepackage[T1]{fontenc}
\usepackage{amssymb}

\usepackage{amsmath}

\usepackage{amsthm}

\usepackage{mathtools}
\usepackage{booktabs}
\usepackage{xspace}
\usepackage{tikz}
\usepackage{hyperref}
\usepackage[inline]{enumitem}

%---------- Macros ----------

\renewcommand{\phi}{\varphi}
\newcommand{\eval}{V}
\newcommand{\outcome}{out}
\newcommand{\ap}{\mathrm{AP}}
\newcommand{\boolfour}{\mathbb{B}_4}

\newcommand{\atl}{ATL\xspace}
\newcommand{\atlstar}{ATL$^*$\xspace}
\newcommand{\atlstarf}{ATL$^*$[$\mathcal F$]\xspace}
\newcommand{\ratlstar}{rATL$^*$\xspace}
\newcommand{\atlf}{ATL[$\mathcal F$]\xspace}
\newcommand{\ratl}{rATL\xspace}
\newcommand{\ctl}{CTL\xspace}
\newcommand{\rctl}{rCTL\xspace}
\newcommand{\ctlstar}{CTL$^*$\xspace}
\newcommand{\rctlstar}{rCTL$^*$\xspace}
\newcommand{\ltl}{LTL\xspace}
\newcommand{\rltl}{rLTL\xspace}
\newcommand{\ltlf}{LTL[$\mathcal F$]\xspace}

\newcommand{\set}[1]{\{#1\}}
\newcommand{\pow}[1]{2^{#1}}
\newcommand{\nats}{\mathbb{N}}
\newcommand{\element}[2]{#1[#2]}
\newcommand{\ptime}{\textsc{PTime}\xspace}
\newcommand{\exptime}{\textsc{ExpTime}\xspace}
\newcommand{\twoexptime}{\textsc{2ExpTime}\xspace}

\newcommand{\states}{St}
\newcommand{\ag}{{Ag}}
\newcommand{\ac}{{Ac}}
\newcommand{\tr}{\delta}
\newcommand{\lab}{\ell}
\newcommand{\struct}{\mathcal{S}}
\newcommand{\pth}{\pi}
\newcommand{\size}[1]{|#1|}
\newcommand{\strat}{f}
\newcommand{\stratset}{F}
\newcommand{\win}{\mathrm{Win}}
\newcommand{\game}{\mathcal{G}}
\newcommand{\trans}{\mathcal{K}}

\renewcommand{\next}{\mathrm{Next}}
\newcommand{\reach}{\mathrm{Reach}}
\newcommand{\safe}{\mathrm{Safety}}
\newcommand{\buchi}{\mathrm{B\ddot{u}chi}}
\newcommand{\cobuchi}{\mathrm{coB\ddot{u}chi}}

\newcommand{\nrsbc}{NRSBC\xspace}

\newcommand{\tval}{t}
\newcommand{\satset}{\mathrm{Sat}}

\renewcommand{\vec}{v}
\newcommand{\vecs}{AV}

%---------- Definition of robust temporal operators ----------
% TikZ style for box and dimaond operators
\tikzset{robust/.style={line width=.16ex,line join=round}}

% Box
\let\Box\relax
\DeclareMathOperator{\Box}{%
	\text{%
		\tikz[baseline]{%
    			\draw[robust] (0ex,-.1ex) -- (0ex, 1.4ex) -- (1.5ex, 1.4ex) -- (1.5ex, -.1ex) -- cycle;%
		}%
	}%
}

% Box-dot

\DeclareMathOperator{\Boxdot}{%
	\text{%
		\tikz[baseline]{%
    			\draw[robust] (0ex, -.1ex) -- (0ex, 1.4ex) -- (1.5ex, 1.4ex) -- (1.5ex, -.1ex) -- cycle;%
	    		\fill (.75ex, .65ex) circle (.15ex);%
    		}%
	}%
}

% Diamond
\let\Diamond\relax
\DeclareMathOperator{\Diamond}{%
	\text{%
		\tikz[baseline]{%
			\draw[robust] (0ex,.6ex) -- (.95ex, 1.55ex) -- (1.9ex, .6ex) -- (.95ex, -.35ex) -- cycle;%
		}%
	}%
}

% Diamond-dot
\DeclareMathOperator{\Diamonddot}{%
	\text{%
		\tikz[baseline]{%
			\draw[robust] (0ex,.6ex) -- (.95ex, 1.55ex) -- (1.9ex, .6ex) -- (.95ex, -.35ex) -- cycle;%
			\fill (.95ex, .6ex) circle (.15ex);%
		}%
	}%
}

%Next
\DeclareMathOperator{\Next}{%
	\text{%
		\tikz[baseline]{%
    			\draw[robust] (.75ex, .65ex) circle (.75ex);%
    		}%
	}%
}

%Next-dot
\DeclareMathOperator{\Nextdot}{%
	\text{%
		\tikz[baseline]{%
    			%\draw[line width=.12ex,line join=round] (0ex, -.1ex) -- (0ex, 1.4ex) -- (1.5ex, 1.4ex) -- (1.5ex, -.1ex) -- cycle;%
    			\draw[robust] (.75ex, .65ex) circle (.75ex);%
	    		\fill (.75ex, .65ex) circle (.15ex);%
    		}%
	}%
}

%Until

%Until-dot

% Release

% Release-dot

% Weak-until

% Weak-until-dot

% existential strategy operators
\newcommand{\stratDiamonddot}[1]{\ensuremath{\langle\!\langle#1\rangle\!\rangle}}
\newcommand{\stratDiamond}[1]{\stratDiamonddot{#1}}

% universal strategy operators
\newcommand{\stratBoxdot}[1]{\ensuremath{[\![#1]\!]}}
\newcommand{\stratBox}[1]{\stratBoxdot{#1}}

\DeclareMathOperator{\cNext}{%
	\text{%
		\tikz[baseline]{%
    			\draw[robust] (.75ex, .65ex) circle (.75ex);%
    		}%
	}%
}

%---------- Meta data ----------
% Title
\title{Robust Alternating-Time Temporal Logic}

\author{
    Aniello Murano\inst{1}\orcidID{0000-0003-4876-3448} \and
    Daniel Neider\inst{2,3}\orcidID{0000-0001-9276-6342} \and
    Martin Zimmermann\inst{4}\orcidID{0000-0002-8038-2453}
}
%\authorrunning{F. Author et al.}

% Institute
\institute{
    Università degli Studi di Napoli “Federico II”, Naples, Italy\\
    \and
    TU Dortmund University, Germany\\
    \email{daniel.neider@tu-dortmund.de}
    \and
    Center for Trustworthy Data Science and Security, UA Ruhr, Germany \and
    Aalborg University, Denmark
}

%---------- Document begins here ----------
\begin{document}

% Make title
\maketitle

%---------- Abstract ----------
\begin{abstract}
In multi-agent system design, a crucial aspect is to ensure robustness, meaning that for a coalition of agents~$A$, small violations of adversarial assumptions only lead to small violations of $A$'s goals. 
In this paper we introduce a logical framework for robust strategic reasoning about multi-agent systems.
Specifically, inspired by recent works on robust temporal logics, we introduce and study \ratl and \ratlstar, logics that extend the well-known Alternating-time Temporal Logic \atl and \atlstar by means of an opportune multi-valued semantics for the strategy quantifiers and temporal operators. We study the model-checking and satisfiability problems for \ratl and \ratlstar and show that dealing with robustness comes at no additional computational cost. Indeed, we show that these problems are \ptime-complete and \exptime-complete for \ratl, respectively, while both are \twoexptime-complete for \ratlstar.

\keywords{Multi-Agents  \and Temporal Logic \and Robustness.}
\end{abstract}

%---------- Introduction ----------
% !Tex root=jelia23.tex

\section{Introduction}
Multi-agent system verification has been receiving a lot of attention in recent years, thanks to the introduction of powerful logics for strategic reasoning~\cite{DBLP:journals/jacm/AlurHK02,DBLP:journals/iandc/ChatterjeeHP10,DBLP:journals/tocl/MogaveroMPV14,DBLP:journals/iandc/LaroussinieM15,DBLP:conf/ijcai/BelardinelliJKM19}. Along this line of research, a story of success  is \emph{Alternating-Time Temporal Logic} (\atl) introduced by Alur, Henzinger, and Kupferman~\cite{DBLP:journals/jacm/AlurHK02}.
\atl is a generalization of Computation Tree Logic (\ctl)~\cite{ctl}, obtained by replacing the path quantifier $\exists$ (and its dual $\forall$), with the modality $\stratDiamonddot{A}$ (and its dual $\stratBoxdot{A}$), where $A$ is a set of agents.
The interpretation of $\stratDiamonddot{A}\varphi$ is that the coalition $A$ has a strategy such that the outcome of this strategy satisfies $\varphi$, no matter how the coalition of the agents not in $A$ behaves. \atl formulas are interpreted over concurrent game structures, which extend classical Kripke structures to represent the dynamism of the agents. The model-checking problem of \atl is \ptime-complete~\cite{DBLP:journals/jacm/AlurHK02}, while the satisfiability problem is \exptime-complete~\cite{walther2006atl}. 
%\nellom{In many real-life scenarios agents do not know precisely what is the current state of the system, but have only a partial view of the state. This feature of Multi-agent systems is called imperfect information, and it is known that model-checking becomes undecidable when agents have perfect recall about the past \cite{Catalin}.} 

A crucial aspect in multi-agent system design is to ensure system \emph{robustness}, which should reflect the ability of a coalition of agents to tolerate violations (possibly up to some extent) of adversarial assumptions \cite{chaaban2013survey}. Numerous studies have shown that reactive AI systems can be very sensitive to intentional or unintentional external perturbations, posing huge risks to safety-critical applications~\cite{kaur2022trustworthy}.
%In recent years, 
Notably, the formal methods community has put large efforts in reasoning about system robustness in several specific settings, mainly concerning closed system verification or (two-player) reactive synthesis \cite{bouyer2008robust,french2007temporal,bloem2010robustness,dallal2016synthesis,donze2010robust,doyen2010robustness}.
%\todo{AM: perhaps more}. 
As far as we are aware of, there are no logic-based works dealing with robust strategic reasoning in multi-agent systems.
To highlight the significance of this challenge, we describe a few specific contexts in which multi-agent systems act as \mbox{the natural model and robustness plays a crucial role.}

\emph{Scenario 1}.
\emph{Climate change} threatens people with food and water scarcity, increased flooding, extreme heat, diseases, and economic loss. Human migration and conflict can be a result. The World Health Organization calls climate change the greatest threat to global health in the 21st century. Recently, researchers examining alternative policies to address the threat of climate change have become increasingly concerned about uncertainty and the fact that we cannot predict the future. 
This requires to develop mathematical models to properly represent the intricate interaction among all decision makers and the ability to define strategies that are robust against a wide range of plausible climate-change futures \cite{lempert2000robust}. 
%By definition, robust strategies are insensitive to uncertainty about the future. 
For risk-averse policy-makers, such strategies would perform reasonably well, at least compared to the alternatives, even if confronted with surprises or catastrophes. Robust strategies may also provide a more solid basis for consensus on political action among stakeholders with different views of the future, because it would provide reasonable outcomes no matter whose view proved correct.

%Rather than find the optimum strategy based on a single set of subjective probabilities, researchers can now systematically and analytically evaluate alternative policies against a wide range of plausible futures and, thereby, directly address the real task that faces climate-change decision-makers -- crafting strategies that are robust in the face of an unpredictable future.

\emph{Scenario 2.} The fast-evolving domain of \emph{autonomous vehicles} is one of the best examples of multi-agent modelling, where  safety-critical decisions strongly rely on sensor observations (e.g., ultrasound, radar, GPS, Lidar, and camera signals)~\cite{veres2011autonomous}. 
%, \nellom{sometimes not very accurate}. 
It is of primary importance that the resulting decisions are robust to perturbations, which often are treated as adversarial perturbations~\cite{modas2020toward}. A careful evaluation of such adversarial behaviours is necessary to build and deploy safer autonomous vehicle systems. 

\emph{Scenario 3}.
\emph{Power systems} play an important role in all sectors of the national economy and in our daily lives. Ensuring a safe and reliable power supply from the power network is a fundamental requirement. 
As renewable energy-based smart grid and micro-grid systems rise in popularity, multi-agent system technology has been establishing itself as a useful paradigm of choice for modelling, analysis, control and optimization of power systems \cite{hassan2021incorporating,sampaio2017automatic,singh2017distributed}. The model usually consists of several agents competing not only among themselves to get energy resources, but also playing against the unpredictable behaviour of nature.
%\nellom{(i.e., imperfect information)}. 
Then, a classical safety requirement amounts to ensuring system robustness, in the meaning that the power system has to keep operating, possibly by rationing resources, despite the loss of any single asset such as lines or power plants at any time
\cite{afzal2020state,bevrani2014robust}. This is usually enforced by following a simple guiding redundancy principle while developing the system: designers have to predict the effect of having any line disconnected in any moment and cope with it, in real time and even at larger scales \cite{omnes2021adversarial}. This may also require the players to coordinate and/or play rational while keeping the system under equilibrium \cite{belhaiza2014game}. 

\paragraph{Our Contribution}
In this paper we introduce \ratl, a robust version of the logic \atl. 
Our approach here follows and extends an approach originally introduced for robust Linear Temporal Logic (\rltl) \cite{DBLP:conf/csl/TabuadaN16} and later extended to robust Computation Tree Logic (\rctl and \rctlstar) \cite{rctl}. 
To illustrate the robust semantics, consider an invariant of the form $\Box p$ specifying that the proposition~$p$ always holds. 
There are several ways this invariant can be violated, with varying degrees of severity. For example, $p$ failing to hold a finite number of times is less severe than $p$ failing infinitely often. 
An even worse situation is $p$ holding only finitely often while $p$ not even holding once is the worst way to violate the invariant.
The authors in \cite{DBLP:conf/csl/TabuadaN16} argue that these five degrees are canonical and use them as the basis of a five-valued robust semantics for temporal logics.
The semantics of the Boolean operators are then defined to capture the intuition that there are different degrees of violation of a formula while the other temporal operators, e.g., next and eventually, are defined as usual.
In particular, the definition of implications captures the idea that, in a specification of the form $\phi \to \psi$, a ``small'' violation of an environment assumption $\phi$ must lead to only a ``small'' (proportional) violation of a system's guarantee $\psi$.

Here, we devise a meaningful robust semantics for the strategy quantifiers to obtain a robust variant of \atl, and show that it is capable to reason about the robustness of multi-agent systems.
More precisely, \ratl allows to assess whether a strategy $f$ of a coalition $A$ is robust in the sense that, with respect to the outcome of $f$, small violations of the adversarial team assumptions only lead to small violations of $A$’s goals.  
We study expressiveness of \ratl and show that it strictly subsumes \atl, as \ratl can express fairness. We also study the model-checking and satisfiability problems for \ratl and show that dealing with robustness comes at no additional computational cost. Indeed, we show that these problems are \ptime-complete and \exptime-complete, respectively. 
This is in line with the results on \rltl and \rctl, for which model-checking and satisfiability are also not harder than for \ltl~\cite{DBLP:conf/csl/TabuadaN16} and \ctl~\cite{rctl}, respectively.

Finally, we also study \ratlstar, the robustification of \atlstar, showing that also in this setting, robustness comes for free:  model-checking and satisfiability for \ratlstar are $\twoexptime$-complete, as they are already for \atlstar~\cite{DBLP:journals/jacm/AlurHK02,Schewe}.

All proofs omitted due to space restrictions can be found in the appendix.
%full version~\cite{arxiv}.

\paragraph{Related work} 
There are several works done in formal strategic reasoning that have been used (or can be easily adapted) to develop robust systems. Besides those reported above, we recall the works dealing with strategy logics extended with probabilistic \cite{huang2013logic,schnoor2013epistemic,aminof2019probabilistic} and knowledge (imperfect information) aspects \cite{dima2011model}. These works allow to reason about the unpredictable behaviour of the environment. Unfortunately, in both cases, the model-checking problem becomes highly undecidable, unless one restricts strategies to be memoryless. In the imperfect information case, memoryfull strategies with less severe restrictions have been also studied (e.g., hierarchical visibility~\cite{berthon2021strategy} and public action \cite{belardinelli2020verification}) although model-checking remains infeasible, i.e.,  non-elementary, in practice.

Other lines of research have considered quantitative aspects of the logic, in different directions.
Bouyer et al.~\cite{BKMMMP23} considered a fuzzy extension of \atlstar, namely \atlstarf. The satisfaction value of \atlstarf formulas is a real value in~$[0, 1]$, reflecting ``how much'' or ``how well'' the strategic on-going objectives of the underlying
agents are satisfied. In~\cite{BKMMMP23} a double exponential-time model-checking procedure for \atlstarf is presented. A careful inspection of that procedure yields, for the special case of \atlf, an \exptime-completeness result by means of an exponential reduction to Büchi games.
Faella, Napoli, and Parente~\cite{faella2010graded} and Aminof et al.~\cite{aminof2018graded} considered a graded extension of the logics \atl and \atlstar with the ability of checking for the existence of redundant winning strategies.

Module checking is another example of a formal method to devise robust systems. Indeed, module checking amounts to checking whether a strategic behaviour of a coalition of agents satisfies a goal, irrespective to all possible nondeterministic behaviours of an external environment \cite{kupferman2001module,jamroga2014module}. 

Finally, robustness is also an active field of study in reinforcement learning \cite{pinto2017robust}, which treats environment mismatches as adversarial perturbations against a coalition of agents. In the simplest version, the underlying model is a two-player zero-sum
simultaneous game between the protagonist who aims to find a robust strategy across environments and the adversary who
exerts perturbations. Computational methods have been proposed to solve this game and to find a robust strategy for the protagonist
%Despite promising empirical performances in many tasks, existing approaches have shown several limitations in practice
(see Pinto et al.~\cite{pinto2017robust} and the references~therein).

% \paragraph{Outline}
% The rest of the paper is organized as follows. In Section~\ref{sec:preliminaries}, we give some preliminary notation and recall concurrent game structures used to interpret \ratl formulas. In Section~\ref{sec:rATL}, we introduce the syntax and the semantics of \ratl and in Section~\ref{sec:example} we present a practical example regarding smart grids. Then, we study the model-checking problem in Section~\ref{sec:modelchecking}, the satisfiability problem in Section~\ref{sec:satisfiability}, and compare \ratl with \atl and \rctl in Section~\ref{sec_expressiveness}. We conclude the paper with a discussion of future work in Section~\ref{sec:conc}.\todo{if you uncomment this, you need to add the \ratlstar chapter to the outline!!}

%---------- Introduction ----------
% !Tex root=jelia23.tex

\section{Preliminaries}
\label{sec:preliminaries}

We denote the nonnegative integers by $\nats$, and the power set of a set~$S$ by $\pow{S}$.
Throughout the paper, we fix a finite set~$\ap$ of atomic propositions. 

A concurrent game structure~$\struct = (\states, \ag, \ac, \tr, \lab)$ consists of a finite set~$\states$ of states, a finite set~$\ag$ of agents, a finite set~$\ac$ of actions, and a labeling function~$\lab \colon \states \rightarrow \pow{\ap}$.
An action vector for a subset~$A \subseteq \ag$ is a mapping~$\vec \colon A \rightarrow \ac$.
Let $\vecs$ denote the set of action vectors for the full set~$\ag$ of agents.
The transition function~$\tr\colon \states \times \vecs \rightarrow \states$ maps a state and an action vector to a state.
The size of $\struct$ is defined as $\size{\states \times \vecs}$.

We say that a state~$s'$ is a successor of a state~$s$ if there is an action vector~$\vec \in \vecs$ such that $s' = \tr(s, \vec)$.
A path of $\struct$ is an infinite sequence~$\pth = s_0s_1s_2\cdots$ of states such that $s_{n+1}$ is a successor of $s_n$ for every $n \ge 0$.
We write $\element{\pth}{n}$ for $s_n$.
%, $\prefix{\pth}{n}$ for $s_0 \cdots s_{n-1}$,
% and $\suffix{\pth}{n}$ for $s_n s_{n+1} s_{n+2} \cdots$.

A strategy for an agent is a function~$\strat \colon \states^+ \rightarrow \ac$. 
Given a set~$\stratset_A = \set{\strat_a \mid a \in A}$ of strategies, one for each agent in some set~$A \subseteq \ag$, $\outcome(s,\stratset_A)$ denotes the set of paths starting in $s$ that are consistent with $\stratset_A$. 
Formally, a path~$s_0 s_1 s_2 \cdots$ is in $\outcome(s,\stratset_A)$ if $s_0 = s$ and for all $n \ge 0$, there is an action vector~$\vec \in \vecs$ with $\vec(a) = \strat_a(s_0 \cdots s_{n})$ for all $a \in A$ and $s_{n+1} = \delta(s_n, \vec)$.
Intuitively, $\outcome(s, \stratset_A)$ contains all paths that are obtained by the agents in $A$ picking their actions according to their strategies and the other agents picking their actions arbitrarily.

%---------- rATL ----------
% !Tex root=jelia23.tex

\section{\ratl}
\label{sec:rATL}

The basic idea underlying our robust version of ATL, or \emph{\ratl} for short, is that a ``small'' violation of an environment assumption (along the outcome of a strategy) must lead to only a ``small'' violation of a system's guarantee. This is obtained by devising a robust semantics for the strategy quantifiers and by stating formally what it is meant for a ``small'' violations of a property. For the latter, we follow and adapt the approach by Tabuada and Neider~\cite{DBLP:conf/csl/TabuadaN16}, initially proposed for a robust version of Linear Temporal Logic~(\rltl), and use five truth values: $1111$, $0111$, $0011$, $0001$, and $0000$.
Let $\boolfour$ denote the set of these truth values.
Our motivation for using the seemingly odd-looking truth values in $\boolfour$ is that they represent five canonical ways how a system guarantee of the form ``always $p$'' ($\Box p$ in LTL) can be satisfied or violated.
Clearly, we prefer that $p$ always holds, represented by the truth value $1111$.
However, if this is impossible, the following best situation is that $p$ holds at least almost always, represented by $0111$. Similarly, we would prefer $p$ being satisfied at least infinitely often, represented by $0011$, over $p$ being satisfied at least once, represented by $0001$.
Finally, the worst situation is that $p$ never holds, represented by $0000$.
Put slightly differently, the bits of each truth value represent (from left to right) the modalities ``always''~($\Box$), ``eventually always''~($\Diamond\Box$), ``always eventually''~($\Box\Diamond$), and ``eventually''~($\Diamond$).
We refer the reader to Anevlavis et al.~\cite{DBLP:journals/tocl/AnevlavisPNT22} for an in-depth explanation of why these five ways are canonical.

Following the intuition above, we order the truth values in $\boolfour$ by
\[ 1111 \succ 0111 \succ 0011 \succ 0001 \succ 0000. \]
This order spans a spectrum of truth values ranging from $1111$, corresponding to $\mathit{true}$, on one end, to $0000$, corresponding to $\mathit{false}$, on the other end.
Since we arrived at the set $\boolfour$ by considering the canonical ways of how the invariant property $\Box p$ can fail, we interpret all truth values different from $1111$ as \emph{shades of $\mathit{false}$}.
%We interpret $1111$ as $\mathit{true}$ and all other truth values as different shades of $\mathit{false}$.
We return to this interpretation when we later define the semantics for the negation in \ratl.

Having formally discussed how we ``grade'' the violation of a property along paths, we are now ready to define the syntax of \ratl via the following grammar:
%underlying the logic \ratl, we are now ready to define its syntax and semantics. Following the notation by Tabuada and Neider~\cite{DBLP:conf/csl/TabuadaN16}, the syntax of \ratl is given by the grammar
%
\begin{align*}
	\phi &\Coloneqq p \mid \neg \phi \mid \phi  \vee \phi \mid \phi \wedge \phi \mid \phi \rightarrow \phi \mid \stratDiamonddot{A}\Phi \mid \stratBoxdot{A}\Phi \\
	\Phi  &\Coloneqq \Nextdot\phi \mid \Diamonddot\phi \mid  \Boxdot \phi 
\end{align*}
where $p$ ranges over atomic propositions and $A$ ranges over subsets of agents.
We distinguish between \emph{state formulas} (those derivable from $\varphi$) and \emph{path formulas} (those derivable from $\Phi$).
If not specified, an \ratl formula is a state formula.

Various critical remarks should be made concerning the syntax of \ratl.
First, we add ``dots'' to temporal operators (following the notation by Tabuada and Neider~\cite{DBLP:conf/csl/TabuadaN16}) to distinguish between the original operators in \atl and their robustified counterparts in \ratl---otherwise, the syntax stays the same.
Second, many operators of \ratl, most notably the negation and implication, can no longer be derived via De\,Morgan's law or simple logical equivalencies due to \ratl's many-valued nature.
Hence, they need to be added explicitly.
Third, we omit the until and release operators here to avoid cluttering our presentation too much.
Both can be added easily, as in \rltl~\cite{DBLP:conf/cdc/AnevlavisPNT18,DBLP:conf/hybrid/AnevlavisNPT19}.

We define the semantics of \ratl by an \emph{evaluation function}~$\eval$ that maps a state formula and a state or a path formula and a path to a truth value in $\boolfour$.
To simplify our presentation, we use $b[k]$ as a shorthand notation for addressing the $k$-th bit, $k \in \set{1,2,3,4}$, of a truth value $b = b_1b_2b_3b_4 \in \boolfour$ (i.e., $b[k] = b_k$).
It is worth emphasizing that our semantics for \ratl is a natural extension of the Boolean semantics of \atl and is deliberately designed to generalize the original Boolean semantics of \atl (see Subsection~\ref{subsec_expressiveness}).

Turning to the definition of \ratl's semantics, let us begin with state formulas.
For atomic propositions $p \in \ap$, we define the valuation function by
\[
	\eval(s, p) = \begin{cases} 1111 &\text{if $p \in \lab(s)$; and} \\ 0000 & \text{if $p \notin \lab(s)$.} \end{cases}
\]
Note that this definition mimics the semantics of \atl in that propositions get mapped to one of the two truth values $\mathit{true}$ ($1111$) or $\mathit{false}$ ($0000$).
As a consequence, the notion of robustness in \ratl does not arise from atomic propositions (e.g., as in \ltlf by Almagor, Boker, and Kupferman~\cite{DBLP:journals/jacm/AlmagorBK16} or fuzzy logics) but from the evolution of the temporal operators (see the semantics of path formulas).
This design choice is motivated by the observation that assigning meaningful (robustness) values to atomic propositions is often highly challenging in practice---if not impossible.

The semantics of conjunctions and disjunctions are defined as usual for many-valued logics in terms of the functions $\min$ and $\max$:
\begin{align*}
	\eval(s, \varphi_1 \lor \varphi_2) & = \max{\bigl( \eval(s, \varphi_1), \eval(s, \varphi_2) \bigr)} \\    
	\eval(s, \varphi_1 \land \varphi_2) & = \min{ \bigl( \eval(s, \varphi_1), \eval(s, \varphi_2) \bigr)}
\end{align*}

To define the semantics of negation, remember our interpretation of the truth values in $\boolfour$: $1111$ corresponds to $\mathit{true}$ and all other truth values correspond to different shades of $\mathit{false}$.
Consequently, we map $1111$ to $0000$ and all other truth values to $1111$. %, as suggested by \citeauthor{DBLP:conf/csl/TabuadaN16} in the case of \rltl.
This idea is formalized by
\[
	\eval(s, \lnot \phi) = \begin{cases} 0000 & \text{if $\eval(s,\phi) = 1111$; and} \\ 1111 & \text{if $\eval(s, \phi) \prec 1111$.} \end{cases}
\]
Note that the definition of $\eval(s, \lnot \phi)$ is not symmetric, which is in contrast to other many-valued logics, such as \ltlf.
However, it degenerates to the standard Boolean negation if one considers only two truth values.

Since our negation is defined in a non-standard way, we cannot recover implication from negation and disjunction.
Instead, we 
%follow \citeauthor{DBLP:conf/csl/TabuadaN16} and 
define the implication $a \rightarrow b$ by requiring that $c \prec a \rightarrow b$ if and only if $\min{\{ a, c \}} \prec b$ for every $c \in \boolfour$.
This notion leads to
\[
	\eval(s, \varphi_1 \rightarrow \varphi_2) = \begin{cases} 1111 & \text{if $\eval(s, \varphi_1) \preceq \eval(s, \varphi_2)$; and} \\ \eval(s,\varphi_2) & \text{if $\eval(s, \varphi_1) \succ \eval(s, \varphi_2)$.} \end{cases}
\]
Again, this definition collapses to the usual Boolean definition in case one considers only two truth values.
%We refer the reader to the work of Tabuada and Neider~\cite{DBLP:conf/csl/TabuadaN16} for further motivation of the semantics of Boolean operators.

We now provide the robust semantics for the strategy quantifiers, which
are the key ingredient in \ratl. First, notice that the strategy quantifiers~$\stratDiamonddot{\cdot}\,$ and $\stratBoxdot{\cdot}$ are not dual in our robustified version of \atl and require their individual definitions.
Intuitively, $\stratDiamonddot{A}\Phi$ is the largest truth value that the coalition~$A$ of agents can enforce for the path formula~$\Phi$, while $\stratBoxdot{A}\Phi$ is the largest truth value that $\ag\setminus A$ can enforce against $A$. Formally, we have the following:
\begin{itemize}
	\item $\eval(s, \stratDiamonddot{A}\Phi)$ is the maximal truth value~$b \in \boolfour$ such that there is a set~$\stratset_A$ of strategies, one for each agent in $A$, such that for all paths~$\pth \in \outcome(s, \stratset_A)$ we have $\eval(\pth, \Phi) \succeq b$.
	\item $\eval(s, \stratBoxdot{A}\Phi)$ is the maximal truth value~$b \in \boolfour$ such that for all sets~$\stratset_A$ of strategies, one for each agent in $A$, there exists a path~$\pth \in \outcome(s, \stratset_A)$ with $\eval(\pth, \Phi) \succeq b$.
\end{itemize}

Let us now turn to the %definition of the 
semantics of path formulas.
%, which essentially follow the definition of \rltl.
We begin with the $\Boxdot$-operator.
This operator captures the five canonical ways an invariant property ``always~$p$'' can be satisfied or violated, thereby implementing the
intuition we have presented at the beginning of this section.
% principle underlying \ratl that we 
%presented at this section's beginning.
Formally, the valuation function $\eval(\pth, \Boxdot \phi)$ is given by $\eval(\pth, \Boxdot \phi) = b_1b_2b_3b_4$ where
\begin{align*}
    b_1 &= \min\nolimits_{i \ge 0} \eval(\pth[i], \phi)[1], & b_3 &= \min\nolimits_{i \ge 0} \max\nolimits_{j \ge i} \eval(\pth[j], \phi)[3],\\
    b_2 &= \max\nolimits_{i \ge 0} \min\nolimits_{j \ge i} \eval(\pth[j], \phi)[2],& b_4 &= \max\nolimits_{i \ge 0} \eval(\pth[i], \phi)[4] ).
\end{align*}
Note that for $p \in \ap$ and a path $\pi$, the semantics of the formula $\Boxdot p$ on $\pi$ amounts to the four-tuple $(\Box p$, $\Diamond\Box p$, $\Box\Diamond p$, $\Diamond p)$ because $\eval(s, p)$ is either $0000$ or $1111$ on every state $s$ along $\pi$ (i.e., all bits are either $0$ or $1$).
However, the interpretation of $\eval(\pi, \Boxdot \phi)$ becomes more involved once the formula~$\phi$ is nested since the semantics of the $\Boxdot$-operator refers to individual bits of $\eval(\pi, \phi)$.

Finally, the semantics for the $\Diamonddot$-operator and $\Nextdot$-operator are straightforward as there are only two possible outcomes: either the property is satisfied, or it is violated.
Consequently, we define the valuation function by
\begin{itemize}
	\item \mbox{$\eval(\pth, \Diamonddot \phi) = b_1b_2b_3b_4$ with $b_k = \max_{i \ge 0} \eval(\pth[i], \phi)[k]$;} and
%	\item $\eval(\pth, \Nextdot \phi) = \eval(\pth[1], \phi)$.
    \item $\eval(\pth, \Nextdot \phi) = b_1 b_2 b_3 b_4$ with $ b_k = \eval(\pth[1], \phi)[k]$.
\end{itemize}
Again, note that both $\eval(\pth, \Diamonddot \phi)$ and $\eval(\pth, \Nextdot \phi)$ refer to individual bits of $\eval(\pi, \phi)$.

\begin{example}
\label{example:ratl}
Consider the formula~$\phi = \stratDiamonddot{A}\Boxdot p$. 
We have 
\begin{itemize}
    \item $\eval(s, \phi) = 1111$ if the coalition~$A$ has a (joint) strategy to ensure that $p$ holds at every position of every outcome. 
    \item $\eval(s, \phi) = 0111$ if the coalition~$A$ has strategy to ensure that $p$ holds at all but finitely many positions of every outcome. 
    \item $\eval(s, \phi) = 0011$ if the coalition~$A$ has strategy to ensure that $p$ holds at infinitely many positions of every outcome. 
    \item $\eval(s, \phi) = 0001$ if the coalition~$A$ has strategy to ensure that $p$ holds at least once on every outcome. 
\end{itemize}
\end{example}

%---------- rATL Model Checking ----------
% !Tex root=jelia.tex

\subsection{\ratl Model-Checking}
\label{subsec:modelchecking}
The model-checking problem for \ratl is as follows: Given a concurrent game structure~$\struct$, a state~$s$, an \ratl formula~$\varphi$, and a truth value~$\tval \in \boolfour$, is $\eval(s, \varphi) \succeq \tval$?

\begin{theorem}
\label{thm:mc}
\ratl model-checking is \ptime-complete.
\end{theorem}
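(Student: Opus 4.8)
The plan is to establish membership in \ptime by a bottom-up labeling algorithm over the subformulas of $\varphi$, mirroring the classical \atl model-checking procedure, and to obtain hardness by a trivial reduction from \atl model-checking. For membership, I would process the subformulas of $\varphi$ in order of increasing size, and for each \emph{state} subformula~$\psi$ compute, for every state~$s \in \states$, the value $\eval(s,\psi) \in \boolfour$. Since $\boolfour$ has only five elements, storing these values costs $O(\size{\states})$ per subformula, and there are $O(\size{\varphi})$ subformulas. The Boolean cases ($p$, $\neg$, $\vee$, $\wedge$, $\rightarrow$) are immediate: each is a local function of the already-computed values at~$s$, evaluable in constant time per state by directly applying the defining case distinctions from the semantics. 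The only nontrivial cases are the strategy quantifiers~$\stratDiamonddot{A}\Phi$ and $\stratBoxdot{A}\Phi$.

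The key idea for the quantifier cases is \emph{bit-slicing}: once the inner state formula~$\phi$ has been labeled, each state~$s$ carries a value $\eval(s,\phi) = b_1b_2b_3b_4$. For a fixed bit position~$k \in \set{1,2,3,4}$, define the set $T_k = \set{s \mid \eval(s,\phi)[k] = 1}$, computable in linear time. Inspecting the semantics of $\Boxdot$, $\Diamonddot$, and $\Nextdot$, I observe that each of the four output bits of $\eval(\pth,\Phi)$ depends \emph{only} on bit~$k$ of $\phi$ along~$\pth$ and is exactly a standard \ltl-style condition on the sequence of memberships in~$T_k$: bit~$4$ is reachability of~$T_k$, bit~$1$ is the invariant ``stay in~$T_k$'', bit~$3$ is the Büchi condition ``visit~$T_k$ infinitely often'', and bit~$2$ is the co-Büchi condition ``eventually stay in~$T_k$''; for $\Nextdot$ each bit is simply membership of the successor in~$T_k$, and for $\Diamonddot$ each bit is reachability of~$T_k$. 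Thus computing a single output bit of $\eval(s,\stratDiamonddot{A}\Phi)$ reduces to asking whether coalition~$A$ can enforce the corresponding $\omega$-regular objective (reachability, safety, Büchi, or co-Büchi) in the concurrent game structure~$\struct$ from~$s$. These are exactly the \atl objectives whose winning regions are computable in polynomial time by the standard fixpoint algorithms of Alur, Henzinger, and Kupferman~\cite{DBLP:journals/jacm/AlurHK02}; the set $\set{s \mid \eval(s,\stratDiamonddot{A}\Phi)[k]=1}$ is precisely the corresponding winning region. The monotone structure of~$\boolfour$ (the order $1111 \succ \cdots \succ 0000$) guarantees that assembling the four independently-computed bits yields a value genuinely in~$\boolfour$ rather than an arbitrary bit-string; I would confirm this monotonicity by checking that the winning regions are nested appropriately (e.g.\ safety $\subseteq$ co-Büchi $\subseteq$ Büchi $\subseteq$ reachability), which follows from the nesting of the objectives themselves. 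For $\stratBoxdot{A}\Phi$ the same bit-slicing applies, but now each bit is enforced by the \emph{opponent} $\ag \setminus A$ against~$A$, i.e.\ we solve the dual game; by determinacy of these $\omega$-regular concurrent games this is again a polynomial-time \atl-style computation. Running over all $O(\size{\varphi})$ subformulas gives a total polynomial bound, and the final answer is obtained by comparing $\eval(s,\varphi)$ with the threshold~$\tval$.

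For \ptime-hardness, I would reduce from \atl model-checking, which is \ptime-complete~\cite{DBLP:journals/jacm/AlurHK02}. Given an \atl instance, translate each formula into \ratl by dotting the temporal operators and replacing the single-bit \atl judgment ``$\psi$ holds at~$s$'' with the \ratl threshold query ``$\eval(s,\psi) \succeq 1111$''; since atomic propositions take only the values $0000$ and $1111$ and the \ratl semantics is designed to collapse to the Boolean \atl semantics on the top value (as noted in the discussion preceding the theorem), this reduction is correct and computable in logarithmic space.

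\textbf{Main obstacle.} The delicate point is the correctness of bit-slicing for the quantifier cases, specifically the claim that the four bits of $\eval(s,\stratDiamonddot{A}\Phi)$ can be computed \emph{independently} and then reassembled. This requires verifying that one fixed coalition strategy need not simultaneously witness all four bits at their claimed values; rather, the defining maximum ``$\eval(s,\stratDiamonddot{A}\Phi)$ is the maximal~$b$ with a strategy enforcing $\eval(\pth,\Phi) \succeq b$'' must be shown to coincide with the bitwise conjunction of the separately-optimal single-bit strategies. The argument hinges on the total order of~$\boolfour$: enforcing $\eval(\pth,\Phi) \succeq 0011$, say, is equivalent to enforcing bit~$3$ to be~$1$ (a single Büchi objective), because in the totally ordered~$\boolfour$ the threshold~$\succeq b$ is characterized by a single bit constraint, so no genuine multi-objective game arises and the reassembly is sound. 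Making this equivalence precise, and handling the analogous subtlety for~$\stratBoxdot{A}$, is where I expect the proof to require the most care.
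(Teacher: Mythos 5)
Your proposal follows essentially the same route as the paper's proof: a bottom-up labeling of state subformulas, the observation that (because every element of $\boolfour$ is a monotone bit-string and the order is total) the threshold query $\eval(\pth,\Phi)\succeq t$ collapses to a single-bit condition, and a reduction of each quantifier case to a reachability, safety, B\"uchi, or co-B\"uchi game over the target set of states where the inner formula reaches the relevant threshold. Your sets $T_k$ are literally the paper's satisfaction sets $\satset(\phi',t)$ for the four nontrivial thresholds, and the ``main obstacle'' you flag is resolved exactly as you predict. The only presentational difference on the upper bound is that the paper first converts the concurrent game structure into a turn-based game~$\struct_A$ (Lemma~\ref{lemma:gamereduction}, derived from de~Alfaro and Henzinger) so that it can invoke standard polynomial-time algorithms for turn-based \nrsbc games, whereas you appeal directly to polynomial-time solvability of the concurrent games; the hardness arguments (yours from \atl, the paper's from \ctl) are interchangeable.

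The one step that would fail as written is your treatment of $\stratBoxdot{A}\Phi$ ``by determinacy of these $\omega$-regular concurrent games.'' Concurrent games are \emph{not} determined for sure winning (in a matching-pennies gadget neither side can surely enforce a reachability objective), so ``$\ag\setminus A$ can enforce $P$'' and ``$A$ cannot enforce the complement of $P$'' do not coincide in general. The semantics of $\stratBoxdot{A}$ is by definition the latter, so the correct computation---and the one the paper performs---is to take the complement of $A$'s winning region for the complemented objective (formally, the non-existence of a winning strategy for Player~$1$ in $(\struct_A,(\states^\omega\setminus P)')$, as in the second item of Lemma~\ref{lemma:gamereduction}), rather than solving a game from the opponent's side. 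This is a local fix that does not affect the polynomial bound, but the appeal to determinacy should be removed.
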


%As usual for alternating-time logics, our
The proof is based on capturing the semantics of the strategy quantifiers~$\stratDiamonddot{A}$ and $\stratBoxdot{A}$ by sequential two-player games, one player representing the agents in $A$ and the other representing the agents in the complement of $A$. 
We begin by introducing the necessary background on such games.

A (sequential) two-player game structure~$\struct = (\states, \states_1, \states_2, \ac_1, \ac_2, \tr)$ consists of a set~$\states$ of states partitioned into the states~$\states_p \subseteq \states$ of Player~$p \in \set{1,2}$, an action set~$\ac_p$ for Player~$p \in \set{1,2}$, and a transition function~$\tr \colon \states_1 \times \ac_1 \cup \states_2\times \ac_2 \rightarrow \states$.
The size of $\struct$ is $\size{ \states_1 \times \ac_1 \cup \states_2\times \ac_2}$.
A path of $\struct$ is an infinite sequence~$s_0 s_1 s_2 \cdots$ of states such that $s_{n+1} = \tr(s_n, \alpha)$ for some action~$\alpha$.
A strategy for Player~$1$ is a mapping~$\strat \colon \states^\ast\states_1 \rightarrow \ac_1$.
A path~$s_0s_1s_2 \cdots$ is an outcome of $\strat$ starting in $s$, if $s_0 = s$ and  $s_{n+1}= \tr(s_n, \strat(s_0 \cdots s_n))$ for all $n\ge 0$ such that $s_n \in \states_1$.
A two player game~$\game = (\struct, \win)$ consists of a two-player game structure~$\struct$ and a winning condition~$\win \subseteq \states^\omega$, where $\states$ is the set of states of $\struct$.
We say that a strategy~$\strat$ for Player~$1$ is a winning strategy for $\game$ from a state~$s$, if every outcome of $\strat$ starting in $s$ is in $\win$.

Given a concurrent game structure~$\struct = (\states, \ag, \ac, \tr, \lab)$ and $A \subseteq \ag$, we define the two-player game structure~$\struct_A = (\states_1 \cup \states_2, \states_1, \states_2, \ac_1, \ac_2, \tr')$ where $\states_1 = \states$ and $\states_2 = \states \times \ac_1$,
 $\ac_1$ is the set of action vectors for $A$,
 $\ac_2$ is the set of action vectors for $\ag \setminus A$,
 $\tr'(s, \vec) = (s, \vec)$ for $s \in \states_1$ and $\vec \in \ac_1$, and
  $\tr'((s, \vec), \vec') = \tr(s, \vec\oplus\vec')$ for $(s,\vec) \in \states_2$ and $\vec' \in \ac_2$, where $\vec\oplus\vec'$ is the unique action vector for $\ag$ induced by $\vec$ and $\vec'$.
Note that the size of $\struct_A$ is at most linear in the size of $\struct$.

A path in $\struct_A$ alternates between states of $\struct$ and auxiliary states (those in $\states \times \ac_1$), i.e., it is in $(\states \cdot(\states \times\ac_1))^\omega$.
Thus, when translating paths between $\struct$ and $\struct_A$, only states at even positions are relevant (assuming we start the path in $\struct_A$ in $\states$).
Hence, given a property~$P \subseteq \states^\omega$ of paths in $\struct$, we extend it to the corresponding winning condition~$P' = \set{s_0 s_1 s_2 \cdots \in (\states \cdot( \states \times\ac_1))^\omega \mid s_0 s_2 s_4\cdots \in P }$ of paths in $\struct_A$.

The next lemma reduces the (non-) existence of strategies that allow a set~$A$ of agents to enforce a property in $\struct$ (which formalize the semantics of $\stratDiamonddot{A}$ and $\stratBoxdot{A}$) to the (non-) existence of winning strategies for Player~$1$ in $\struct_A$.
It derives 
%follows easily 
from results of de~Alfaro and Henzinger~\cite{concomegagames} for concurrent $\omega$-regular games.

\begin{lemma}
\label{lemma:gamereduction}
Let $\struct$ be a concurrent game structure with set~$\states$ of states containing $s$, let $A$ be a subset of its agents, and let $P \subseteq \states^\omega$.
\begin{enumerate}
    \item There is a set~$\stratset_A$ of strategies, one for each agent~$a \in A$, such that $\outcome(s, \stratset_A) \subseteq P$ iff Player~$1$ has a winning strategy for $(\struct_A, P')$ from $s$.
    \item For all sets~$\stratset_A$ of strategies, one for each agent~$a \in A$, $\outcome(s, \stratset_A) \cap P \neq \emptyset$ iff Player~$1$ does not have a winning strategy for $(\struct_A, (\states^\omega\setminus P)')$ from $s$.
\end{enumerate}
\end{lemma}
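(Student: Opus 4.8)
The plan is to prove both parts through a faithful translation between joint strategies for the coalition~$A$ in $\struct$ and strategies for Player~$1$ in the sequentialized structure~$\struct_A$. The crucial structural observation is that a joint strategy depends only on the sequence of $\struct$-states seen so far, so committing to it up front and only afterwards letting the remaining agents react (exactly as Player~$2$ does in $\struct_A$) does not change the set of achievable outcomes. Concretely, a joint strategy~$\stratset_A = \set{\strat_a \mid a \in A}$ can be viewed as a single map~$\states^+ \to \ac_1$ sending a history~$s_0 \cdots s_n$ to the action vector~$\vec$ with $\vec(a) = \strat_a(s_0 \cdots s_n)$, and this is precisely the type of object a Player-$1$ strategy in $\struct_A$ produces at a state of $\states_1 = \states$. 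I would first record this correspondence and then show that matching strategies induce matching outcomes.

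For Part~$1$, in the forward direction I would take a witness~$\stratset_A$ with $\outcome(s, \stratset_A) \subseteq P$ and define a Player-$1$ strategy~$\strat^*$ that, on a history ending in a state of $\states_1$, discards the auxiliary states in $\states \times \ac_1$ and applies $\stratset_A$ to the extracted $\struct$-history. Every outcome of $\strat^*$ from $s$ then alternates as $s_0\,(s_0, \vec_0)\,s_1\,(s_1, \vec_1)\cdots$ with $\vec_i$ equal to $A$'s committed action at $s_0 \cdots s_i$, so its even-position projection $s_0 s_1 s_2 \cdots$ lies in $\outcome(s, \stratset_A) \subseteq P$; by the definition of the lifted condition~$P'$ the play is in $P'$, so $\strat^*$ is winning. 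In the backward direction I would take a winning Player-$1$ strategy~$\strat$ and extract a joint strategy by \emph{replaying} $\strat$: along any $\struct$-history~$s_0 \cdots s_n$ the auxiliary states $(s_i, \vec_i)$ consistent with $\strat$ are determined recursively by $\vec_i = \strat(s_0 (s_0, \vec_0) \cdots s_i)$, so setting $\strat_a(s_0 \cdots s_n) = \strat(s_0 (s_0, \vec_0) \cdots s_n)(a)$ is well defined. Any $\pth \in \outcome(s, \stratset_A)$ then lifts to a $\strat$-consistent play of $\struct_A$ (the remaining agents supply the $\ac_2$-components), which lies in $P'$ because $\strat$ is winning, whence $\pth \in P$.

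Part~$2$ would then follow from Part~$1$ by complementation, requiring no further game-theoretic input. Writing $Q = \states^\omega \setminus P$, the statement ``$\outcome(s, \stratset_A) \cap P \neq \emptyset$ for every~$\stratset_A$'' is exactly the negation of ``there exists~$\stratset_A$ with $\outcome(s, \stratset_A) \subseteq Q$''. Applying Part~$1$ to the property~$Q$ identifies the latter with the existence of a winning Player-$1$ strategy for $(\struct_A, Q')$, and since $Q' = (\states^\omega \setminus P)'$, negating both sides yields precisely the claim that Player~$1$ has no winning strategy for $(\struct_A, (\states^\omega \setminus P)')$ from $s$.

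I expect the main obstacle to be verifying the faithfulness of the sequentialization rather than the mechanical bookkeeping: one must argue that giving Player~$2$ the chance to respond after seeing Player~$1$'s action in $\struct_A$ grants no extra power compared to the simultaneous moves underlying $\outcome$. This hinges on the fact that strategies are functions of the state history alone, so $A$'s action at each step is already fixed once $\stratset_A$ is chosen; the reduction therefore matches the concurrent semantics, and the existence of the required winning strategies in the resulting $\omega$-regular games is guaranteed by the results of de~Alfaro and Henzinger~\cite{concomegagames}. The remaining care is purely notational, namely tracking the alternation between $\struct$-states and auxiliary states and the even-position projection relating $P$ and $P'$.
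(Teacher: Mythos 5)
Your proof is correct, but it takes a more self-contained route than the paper, which does not spell out an argument at all and instead derives Lemma~\ref{lemma:gamereduction} directly from the results of de~Alfaro and Henzinger~\cite{concomegagames} on concurrent $\omega$-regular games. Your strategy-translation argument is the natural elementary proof: the forward and backward constructions for Part~1 are both sound, and the key observation you isolate---that letting Player~$2$ move after seeing Player~$1$'s action in $\struct_A$ adds no power because the coalition's action vector at each step is already a function of the $\states$-history alone---is exactly the reason the sequentialization is faithful to the concurrent semantics of $\outcome(s,\stratset_A)$. Your handling of Part~2 is a genuine plus: by observing that $\outcome(s,\stratset_A)\cap P\neq\emptyset$ for all $\stratset_A$ is literally the negation of the existence of some $\stratset_A$ with $\outcome(s,\stratset_A)\subseteq \states^\omega\setminus P$, you obtain Part~2 from Part~1 by pure complementation, with no appeal to determinacy of the underlying games---a point the lemma's phrasing might otherwise tempt one to invoke. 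The only blemish is your closing remark that the existence of the required winning strategies ``is guaranteed by'' de~Alfaro and Henzinger: the lemma asserts an equivalence, not existence, and your own argument needs nothing from that reference; the citation is only needed later, for Proposition~\ref{prop:nrsbcgamecomplexity}, to \emph{decide} the resulting games efficiently.
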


In the following, we consider the following winning conditions for a two-player game played in $\struct_A$, all induced by a set~$F \subseteq \states $ of states:
% \begin{itemize}
%     \item $\next(F) = \set{s_0s_1s_2\cdots \in (\states \cdot( \states \times\ac_1))^\omega \mid s_2 \in F}$.
%     \item $\reach(F) = \set{s_0s_1s_2\cdots \in (\states \cdot (\states \times\ac_1))^\omega \mid s_n \in F \text{ for some even }n}$.
%     \item $\safe(F) = \set{s_0s_1s_2\cdots \in (\states \cdot (\states \times\ac_1))^\omega \mid s_n \in F \text{ for all even }n}$.
%     \item $\buchi(F) = \set{s_0s_1s_2\cdots \in (\states \cdot (\states \times\ac_1))^\omega \mid s_n \in F \text{ for infinitely many even }n}$.
%     \item $\cobuchi(F) = \set{s_0s_1s_2\cdots \in (\states \cdot (\states \times\ac_1))^\omega \mid s_n \in F \text{ for all but finitely many even }n}$.
% \end{itemize}
\begin{align*}
    \next(F) & = \set{s_0s_1s_2\cdots \in (\states \cdot( \states \times\ac_1))^\omega \mid s_2 \in F} \\
    \reach(F) & = \set{s_0s_1s_2\cdots \in (\states \cdot (\states \times\ac_1))^\omega \mid s_n \in F \text{ for some even }n} \\
    \safe(F) & = \set{s_0s_1s_2\cdots \in (\states \cdot (\states \times\ac_1))^\omega \mid s_n \in F \text{ for all even }n} \\
    \buchi(F) & = 
    \begin{multlined}[t]
        \set{s_0s_1s_2\cdots \in (\states \cdot (\states \times\ac_1))^\omega \mid {} \\
        s_n \in F \text{ for infinitely many even }n}
    \end{multlined} \\
    \cobuchi(F) & =
    \begin{multlined}[t]
        \set{s_0s_1s_2\cdots \in (\states \cdot (\states \times\ac_1))^\omega \mid {} \\
        s_n \in F \text{ for all but finitely many even }n}
    \end{multlined}
\end{align*}

Again, note that these conditions only refer to even positions, as they will be used to capture a property of paths in $\struct$, i.e., the auxiliary states are irrelevant.

% \begin{remark}
% Let $F \subseteq \states$. Then,
% \begin{itemize}
%     \item $(\states \cdot (\states \times\ac_1))^\omega \setminus \next(F) = \next(\states \setminus F)$;
%     \item $(\states \cdot (\states \times\ac_1))^\omega \setminus \reach(F) = \safe(\states \setminus F)$; and
%     \item $(\states \cdot (\states \times\ac_1))^\omega \setminus \buchi(F) = \cobuchi(\states \setminus F)$.
% \end{itemize}
% \end{remark}

% We say that $(\struct, \next(F))$ is a next game.
% Reachability, safety, Büchi, and coBüchi games are defined similarly.
Collectively, we refer to games with any of the above winning conditions as \nrsbc games.
The following result is a generalization of standard results on infinite games (see, e.g., Grädel, Thomas, and Wilke~\cite{gtw}) that accounts for the fact that only states at even positions are relevant.

\begin{proposition}
\label{prop:nrsbcgamecomplexity}
The following problem is in \ptime: Given an \nrsbc game~$\game$ and a state~$s$, does Player~$1$ have a winning strategy for~$\game$ from~$s$?
\end{proposition}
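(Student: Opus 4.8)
The plan is to solve each of the five winning conditions by reducing it to a classical infinite game on the same arena and then invoking the standard polynomial-time algorithms (see~\cite{gtw}). The crucial structural observation is that every play of $\struct_A$ starting in $\states_1 = \states$ lies in $(\states_1 \cdot \states_2)^\omega$, so that even positions are exactly the Player~$1$ states (in $\states$) and odd positions are exactly the auxiliary Player~$2$ states (in $\states_2 = \states \times \ac_1$). Since the set~$F$ inducing the winning condition satisfies $F \subseteq \states$, no state occurring at an odd position can ever lie in~$F$. This single fact lets us strip away the ``only even positions matter'' clause almost for free.

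For $\reach(F)$ and $\buchi(F)$ the restriction is in fact vacuous: because $F$ is disjoint from $\states_2$, a play visits $F$ at some (resp.\ infinitely many) even position(s) iff it visits $F$ at some (resp.\ infinitely many) position(s) at all. Hence these are ordinary reachability and Büchi games, solvable in polynomial time via attractor and recurrence (nested-attractor) computations, respectively. For $\safe(F)$ and $\cobuchi(F)$ the restriction does matter, since naively we would also forbid odd positions from leaving~$F$. I would repair this by passing to the enlarged set $\hat F = F \cup \states_2$: every odd position lies in $\states_2 \subseteq \hat F$ automatically, while an even position lies in $\hat F$ iff it lies in $F$ (as $\states_1$ and $\states_2$ are disjoint). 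Thus ``all even positions in $F$'' becomes the standard safety condition ``all positions in $\hat F$'', and ``all but finitely many even positions in $F$'' becomes the standard coBüchi condition ``all but finitely many positions in $\hat F$''; both are again solvable in polynomial time. Finally, $\next(F)$ is handled directly: Player~$1$ wins from $s \in \states_1$ iff there is an action $\vec \in \ac_1$ such that $\tr'((s,\vec), \vec') \in F$ for every $\vec' \in \ac_2$, which is a single linear scan over the outgoing edges.

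Each reduction is computable in time linear in $\size{\struct_A}$ and leaves the arena unchanged (only $F$ is replaced by $\hat F$ in the safety and coBüchi cases), and it preserves the set of winning plays verbatim, hence the winner from~$s$. Composing the linear-time reductions with the classical polynomial-time game-solving algorithms yields a polynomial-time decision procedure, establishing membership in \ptime. I do not expect a genuine obstacle here; the only point requiring care is the even-position bookkeeping, and the step where a naive reduction would silently change the game is precisely the replacement of $F$ by $\hat F$ in the $\safe$ and $\cobuchi$ cases, so that is where I would argue the equivalence of winning plays most explicitly.
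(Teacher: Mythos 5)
Your proof is correct and follows essentially the route the paper intends: the paper gives no explicit argument, merely noting that the proposition is a generalization of standard results on reachability, safety, Büchi, and coBüchi games that accounts for the even-position restriction, and your reduction (exploiting $F \subseteq \states$ being disjoint from the auxiliary states, so the restriction is vacuous for $\reach$/$\buchi$ and handled by enlarging $F$ to $F \cup \states_2$ for $\safe$/$\cobuchi$, plus the direct one-step check for $\next$) is exactly the bookkeeping that remark alludes to. No gaps.
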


\begin{proof}[Proof of Theorem~\ref{thm:mc}]
Consider a concurrent game structure~$\struct$ with set~$\states$ of states and an \ratl formula~$\varphi$.
We show how to inductively compute the satisfaction sets~$\satset(\varphi',\tval) = \set{s \in \states \mid \eval(s, \varphi') \succeq \tval}$ for all (state) subformulas~$\varphi'$ of $\varphi$ and all truth values~$\tval \in\boolfour$.
Note that $\satset(\varphi',0000) = \states$ for all formulas~$\phi'$, so these  sets can be computed trivially.

The cases of atomic propositions and Boolean connectives follow straightforwardly from the definition of their semantics (cp.\ the semantics of rCTL~\cite{rctl}), so we focus on the case of formulas of the form~$\stratDiamonddot{A}\Phi$ or $\stratBoxdot{A}\Phi$. 
Note that we only have to consider three cases for $\Phi$, e.g., $\Phi = \Nextdot\phi'$, $\Phi = \Diamonddot\phi'$, and $\Phi = \Boxdot\phi'$ for some state formula~$\phi'$.
The following characterizations are consequences of Lemma~\ref{lemma:gamereduction}:

\begin{itemize}
    \item $s \in \satset(\stratDiamonddot{A}\Nextdot\phi', \tval)$ if and only if Player~$1$ has a winning strategy for $(\struct_A,\next(\satset(\phi', \tval)))$ from $s$.
    
    \item $s \in \satset(\stratDiamonddot{A}\Diamonddot\phi', \tval)$ if and only if Player~$1$ has a winning strategy for $(\struct_A,\reach(\satset(\phi', \tval)))$ from $s$.
    
    \item $s \in \satset(\stratDiamonddot{A}\Boxdot\phi', 1111)$ if and only if Player~$1$ has a winning strategy for $(\struct_A,\safe(\satset(\phi', 1111)))$ from $s$.
    
    \item $s \in \satset(\stratDiamonddot{A}\Boxdot\phi', 0111)$ if and only if Player~$1$ has a winning strategy for $(\struct_A,\cobuchi(\satset(\phi', 0111)))$ from $s$.
    
    \item $s \in \satset(\stratDiamonddot{A}\Boxdot\phi', 0011)$ if and only if Player~$1$ has a winning strategy for $(\struct_A,\buchi(\satset(\phi', 0011)))$ from $s$.
    
    \item $s \in \satset(\stratDiamonddot{A}\Boxdot\phi', 0001)$ if and only if Player~$1$ has a winning strategy for $(\struct_A,\reach(\satset(\phi', 0001)))$ from $s$.

    % \item $s \in \satset(\stratBoxdot{A}\Nextdot\phi',\tval)$ if and only if Player~$1$ does not have a winning strategy for $(\struct_A, \next(\states \setminus \satset(\phi',\tval)))$ from $s$.

    % \item $s \in \satset(\stratBoxdot{A}\Diamonddot\phi',\tval)$ if and only if Player~$1$ does not have a winning strategy for $(\struct_A, \safe(\states \setminus \satset(\phi',\tval)))$ from $s$.
    
    % \item $s \in \satset(\stratBoxdot{A}\Boxdot\phi',1111)$ if and only if Player~$1$ does not have a winning strategy for $(\struct_A, \reach(\states \setminus\! \satset(\phi',1111)))$ from~$s$.

    % \item $s \in \satset(\stratBoxdot{A}\Boxdot\phi',0111)$ if and only if Player~$1$ does not have a winning strategy for $(\struct_A, \buchi(\states \setminus\! \satset(\phi',0111)))$ from~$s$.
    
    % \item $s \in \satset(\stratBoxdot{A}\Boxdot\phi',0011)$ if and only if Player~$1$ does not have a winning strategy for $(\struct_A, \cobuchi(\states \setminus\! \satset(\phi',0011)))$ from~$s$.
    
    % \item $s \in \satset(\stratBoxdot{A}\Boxdot\phi',0001)$ if and only if Player~$1$ does not have a winning strategy for $(\struct_A, \safe(\states \setminus\!\satset(\phi',0001)))$ from~$s$.
\end{itemize}
Analogously, the satisfaction of formulas~$\stratBoxdot{A}\Phi$ can be characterized by the non-existence of winning strategies for Player~$1$, relying on the duality of the reachability (Büchi) and safety (coBüchi) winning conditions and the self-duality of the winning condition capturing the next operator. 
For example, we have $s \in \satset(\stratBoxdot{A}\Nextdot\phi',\tval)$ if and only if Player~$1$ does not have a winning strategy for $(\struct_A, \next(\states \setminus \satset(\phi',\tval)))$ from $s$.

Now, to solve the model-checking problem with inputs~$\struct$, $\varphi$, $s$ and $t$, we inductively compute all satisfaction sets~$\satset(\varphi',t')$ and check whether $s$ is in $\satset(\varphi, t)$.
Using Proposition~\ref{prop:nrsbcgamecomplexity} and the fact that each \nrsbc game we have to solve during the computation is of linear size (in $\size{\struct}$), these $\mathcal{O}(\size{\varphi} \cdot \size{\struct})$ many sets can be computed in polynomial time, where $\size{\phi}$ is the number of state subformulas of $\phi$.

Finally, the lower bound follows from the \ptime-hardness of \ctl model-checking~\cite{ctlmc}, which is a fragment of \ratl (see Subsection~\ref{subsec_expressiveness}).
Furthermore, let us note that the \ptime lower bound for \ctl model-checking already holds for fragment without until and release~\cite{krebsetal} (recall that we do not include until and release in \ratl for the sake of simplicity). 
\end{proof}

\subsection{\ratl Satisfability}
\label{subsec:satisfiability}

This subsection considers the satisfiability problem for \ratl, which is stated as follows: Given
an \ratl formula~$\varphi$ and a truth value~$\tval \in\boolfour$, is there a concurrent game structure~$\struct$ with a state~$s$ such that $\eval(s, \varphi) \succeq t$?

%The following theorem reports the exact complexity of the satisfiability problem for \ratl.

\begin{theorem}
\label{thm:ratlsat}
\ratl satisfiability is \exptime-complete.
\end{theorem}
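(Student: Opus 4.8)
The plan is to prove the two bounds separately: \exptime-hardness follows from the fact that \ratl subsumes \atl, and \exptime-membership follows from a translation into alternating tree automata that is guided by exactly the satisfaction-set characterization already used for Theorem~\ref{thm:mc}. For the \textbf{lower bound}, I would appeal to the expressiveness result of Subsection~\ref{subsec_expressiveness}: every \atl formula~$\psi$ has a ``dotted'' \ratl counterpart~$\psi^{\bullet}$ such that, on every concurrent game structure and every state~$s$, one has $\eval(s,\psi^{\bullet}) = 1111$ if and only if $s$ satisfies~$\psi$ under the Boolean semantics of \atl. Hence $\psi$ is \atl-satisfiable if and only if the instance~$(\psi^{\bullet}, 1111)$ is a positive instance of \ratl satisfiability. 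Since this map is computable in polynomial time and \atl satisfiability is \exptime-hard~\cite{walther2006atl}, \ratl satisfiability is \exptime-hard as well, already for the fixed threshold~$1111$.

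For the \textbf{upper bound} I would mirror the inductive characterization from the proof of Theorem~\ref{thm:mc}, but build, instead of satisfaction sets over a fixed structure, objects that describe \emph{all} models. Concretely, for each of the $O(\size{\varphi})$ relevant pairs~$(\varphi', t')$ consisting of a state subformula and a threshold, I would construct an alternating tree automaton (running over tree unfoldings of concurrent game structures whose branching degree is bounded in terms of~$\size{\varphi}$) that accepts exactly the models~$(\struct, s)$ with $\eval(s, \varphi') \succeq t'$. The Boolean cases, negation, and implication are handled by a finite case distinction over~$\boolfour$: disjunction and conjunction distribute over the threshold, negation at any threshold~$\succeq 0001$ reduces to the complement of the threshold-$1111$ automaton of its argument, and implication reduces to a Boolean combination of the automata of its two arguments at the constantly many thresholds. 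As there are only four nontrivial thresholds, the total number of automata stays polynomial in~$\size{\varphi}$.

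The temporal cases follow the \nrsbc dictionary of Theorem~\ref{thm:mc} verbatim: under~$\stratDiamonddot{A}$, the formula~$\Boxdot\varphi'$ at thresholds~$1111$, $0111$, $0011$, $0001$ is captured by a \emph{safety}, \emph{coBüchi}, \emph{Büchi}, and \emph{reachability} objective over the already-constructed automaton for the appropriate threshold of~$\varphi'$, while~$\Diamonddot\varphi'$ and~$\Nextdot\varphi'$ yield a reachability and a next objective; the dual quantifier~$\stratBoxdot{A}$ is obtained by the corresponding dualisation (safety versus reachability, Büchi versus coBüchi, next self-dual), exactly as the $\stratBoxdot{A}$-cases are obtained from the $\stratDiamonddot{A}$-cases there. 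The decisive point is that each of these objectives is a fixed pattern recognised by a deterministic parity automaton of \emph{constant} size with at most two priorities, so the resulting tree automaton has size polynomial in~$\size{\varphi}$ and bounded index. Satisfiability of~$(\varphi, t)$ is then nonemptiness of this automaton, which—after the standard exponential simulation by a nondeterministic parity tree automaton and solving the induced parity game~\cite{gtw}—is decidable in \exptime.

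The main obstacle, and the exact place where ``robustness comes for free'' is verified, is twofold. First, one must faithfully encode the concurrent-game strategy quantifiers for the coalition~$A$ and its complement into the branching of the tree automaton and justify a bounded-branching (tree-model) property; this is the same difficulty already present for plain \atl satisfiability and can be discharged with the established constructions. Second, and specific to the robust setting, one must ensure that nesting the \emph{mixed} Büchi/coBüchi objectives arising from~$\Boxdot$ at different thresholds does not let the parity index grow with~$\size{\varphi}$. This is guaranteed because cycles in the formula automaton occur only within the evaluation of a single temporal operator, whereas moving between distinct subformulas descends the acyclic subformula DAG; hence every strongly connected component lies within one objective type and contributes at most two priorities, with no accumulation across layers. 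Consequently a single exponential suffices and the double-exponential blow-up of \atlstar satisfiability is avoided, matching the \exptime bound for \atl.
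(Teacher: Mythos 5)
Your proposal is correct, but it takes a genuinely different route from the paper on the upper bound. The paper does not build tree automata directly: it translates the given \ratl formula and truth value into an equivalent alternating $\mu$-calculus formula of \emph{linear} size (observing that the generally exponential \atlstar-to-$\mu$-calculus translation stays polynomial here because each robust temporal operator~$\Nextdot,\Diamonddot,\Boxdot$ unfolds into a nesting of at most two classical operators), and then invokes the known \exptime-completeness of alternating $\mu$-calculus satisfiability. You instead carry the satisfaction-set dictionary of Theorem~\ref{thm:mc} over to an explicit alternating-tree-automaton construction, arguing that every strongly connected component of the automaton lives inside a single $\mathrm{Next}$/reachability/safety/Büchi/coBüchi objective so the parity index stays constant — this is essentially the classical weak/hesitant-automaton argument, and it is sound; it is more self-contained and makes visible exactly \emph{why} robustness is free (polynomial state space, bounded index), at the cost of having to discharge the tree-model/bounded-branching property for concurrent game structures, which the paper avoids entirely by delegating to the $\mu$-calculus result. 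For the lower bound you reduce from \atl satisfiability via Lemma~\ref{lemma:ratl-as-expressive-as-atl}, whereas the paper reduces from \ctl satisfiability as a syntactic fragment; both are valid polynomial-time reductions from an \exptime-hard problem, yours just routes through the dotting translation rather than a direct inclusion.
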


\begin{proof}[Proof sketch]
The upper bound is proven by embedding \ratl into the alternating $\mu$-calculus while the lower bound already holds for \ctl, a fragment of \ratl.
\end{proof}

%---------- rATL Expressivity ----------
% !Tex root=jelia23.tex

\subsection{Expressiveness}
\label{subsec_expressiveness}

The main impetus for introducing \ratl is to devise a robust generalization of \atl as a powerful formalism to deal with robust strategic reasoning in multi-agent systems. A natural question is to state the expressive power of \ratl  with respect to \atl and the robust version of \ctl (\rctl)~\cite{rctl}. 
%just as \rltl robustifies \ltl and robust \ctl (\rctl)~\cite{rctl} robustifies Computation Tree Logic (\ctl)~\cite{ctl}. 
In this subsection, we show that both \atl and \rctl can be embedded into \ratl, i.e., \ratl generalizes both of these logics.
Furthermore, we show that \ratl is strictly more expressive than both of them.
We begin by comparing \ratl and \atl, and show first that \ratl is at least as expressive as \atl, witnessing that our robust extension is set up correctly.
This fact is formalized in the lemma below, intuitively stating that the first bit of the evaluation function captures the semantics of \atl.
%APPENDIX A proof can be found in Appendix~\ref{app:proof-ratl-as-expressive-as-atl}.

\begin{lemma} \label{lemma:ratl-as-expressive-as-atl}
Let $\phi$ be an \atl formula. Then, there exists an \ratl formula $\phi^\star$ such that for every concurrent game structure~$\struct$ and all states~$s$ of $\struct$: $\eval(s, \phi^\star) = 1111$ if and only if $\struct, s \models \phi$.
\end{lemma}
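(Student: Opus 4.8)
The plan is to define the translation $\phi \mapsto \phi^\star$ by structural induction and to prove, simultaneously, a \emph{strengthened} invariant: for every concurrent game structure~$\struct$ and every state~$s$, the value $\eval(s,\phi^\star)$ lies in $\set{0000,1111}$, \emph{and} $\eval(s,\phi^\star) = 1111$ if and only if $\struct, s \models \phi$. Carrying Boolean-valuedness along in the induction hypothesis is essential: with only the ``$=1111$ iff $\models$'' half, the \ratl implication $\phi_1^\star \rightarrow \phi_2^\star$ can take a non-top value even when $\struct,s\models\phi_1\rightarrow\phi_2$. For instance, if $\eval(s,\phi_1^\star)=0011$ and $\eval(s,\phi_2^\star)=0001$, then $\eval(s,\phi_1^\star)\succ\eval(s,\phi_2^\star)$, so the implication evaluates to $0001\neq 1111$, yet $\eval(s,\phi_1^\star)\neq 1111$ forces $\struct,s\not\models\phi_1$ and hence $\struct,s\models\phi_1\rightarrow\phi_2$. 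Such shades of false must therefore be ruled out.

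On atoms I set $p^\star = p$, which is Boolean by the definition of $\eval(s,p)$. On Boolean connectives the translation is homomorphic: $(\neg\phi)^\star = \neg\phi^\star$ and $(\phi_1 \circ \phi_2)^\star = \phi_1^\star \circ \phi_2^\star$ for $\circ\in\set{\vee,\wedge,\rightarrow}$. Here I would check directly from the semantics that, restricted to Boolean inputs, $\min$, $\max$, the non-standard negation, and the implication all collapse to their classical truth tables and return Boolean values; the only slightly delicate case is implication, where one verifies that the four combinations of inputs in $\set{0000,1111}$ yield exactly classical implication (in particular $1111\rightarrow 0000 = 0000$ and all other combinations give $1111$). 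This discharges every propositional case, using the full (Boolean-valued) hypothesis on the immediate subformulas.

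The interesting cases are the strategy quantifiers, where I translate $\stratDiamonddot{A}\Next$, $\stratDiamonddot{A}\Diamond$, $\stratDiamonddot{A}\Box$ to $\stratDiamonddot{A}\Nextdot$, $\stratDiamonddot{A}\Diamonddot$, $\stratDiamonddot{A}\Boxdot$ (and dually for $\stratBoxdot{A}$), applying a harmless double negation $\neg\neg$ on top. The key observation is that when the argument $\phi^\star$ is Boolean-valued, all four bits of $\eval(\pth[i],\phi^\star)$ coincide, so along any path the bitwise definitions of $\Nextdot$, $\Diamonddot$, and $\Boxdot$ reduce to their classical one-dimensional counterparts on the resulting $0/1$ sequence. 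In particular $\eval(\pth,\Boxdot\phi^\star)=1111$ iff $\phi^\star$ holds at every position of $\pth$, since the top bit $b_1$ is ``always'' and ``always'' implies the three weaker modalities, so the value equals $1111$ exactly when $b_1=1$. Combining this with the quantifier definition---the maximal $b$ that $A$ can guarantee, respectively cannot avoid---gives $\eval(s,\stratDiamonddot{A}\Boxdot\phi^\star)=1111$ iff $A$ has a joint strategy forcing $\phi^\star$ always on every outcome, which is precisely the \atl semantics of $\stratDiamonddot{A}\Box\phi$; the analogous statements for $\Nextdot$, $\Diamonddot$, and for $\stratBoxdot{A}$ follow identically.

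The main obstacle---and the reason for the double negation---is that, unlike the propositional and $\Nextdot$/$\Diamonddot$ cases, the formula $\stratDiamonddot{A}\Boxdot\phi^\star$ is genuinely not Boolean-valued: if $A$ can enforce $\phi^\star$ infinitely often but not eventually-always, the quantifier returns the proper shade of false $0011$. Wrapping the quantifier in $\neg\neg$ repairs this, since $\eval(s,\neg\neg\psi)=1111$ when $\eval(s,\psi)=1111$ and $0000$ otherwise, thereby restoring Boolean-valuedness while preserving the ``$=1111$'' characterization established above. (The wrap is only strictly needed for $\Boxdot$, but I would add it uniformly for a cleaner invariant.) This closes the induction. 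One also records that the translation respects the \ratl grammar, as \atl temporal operators occur only directly under a strategy quantifier, matching the state/path-formula split. The overall argument mirrors the \ctl-into-\rctl embedding of~\cite{rctl}.
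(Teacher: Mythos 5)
Your proof is correct, but it takes a genuinely different route from the paper's. The paper first normalizes the \atl formula---eliminating every implication via $\lnot\phi_1\lor\phi_2$ and pushing negations down to the atomic propositions---and only then dots the temporal operators; the induction then gets by with the bare invariant ``$\eval(s,\phi^\star)=1111$ iff $\struct,s\models\phi$'', since negation occurs only at atoms and implication has disappeared. You keep the formula intact, translate homomorphically, and instead strengthen the induction hypothesis to Boolean-valuedness of $\eval(s,\phi^\star)$, restoring that property at the strategy quantifiers with a $\neg\neg$ wrapper. Your diagnosis of why the weaker invariant alone fails for implication (the $0011$/$0001$ example) is precisely the difficulty the paper sidesteps by normalizing (the authors remark that the robust implication ``does not easily map'' to its non-robust counterpart), and your observation that only the $\Boxdot$ cases genuinely leave $\set{0000,1111}$---forcing the double negation there, while $\Nextdot$ and $\Diamonddot$ are already Boolean on Boolean arguments---is accurate. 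The trade-off: the paper's preprocessing keeps the semantic induction minimal but requires a (routine) correctness argument for the normal-form transformation and yields a $\phi^\star$ that is not structurally parallel to $\phi$; your version handles $\neg$ and $\rightarrow$ as primitives and keeps $\phi^\star$ parallel to $\phi$ up to the $\neg\neg$ wrappers, at the cost of carrying the heavier two-part invariant through every case. Both arguments are sound, and your translation does respect the \ratl state/path grammar as you note.
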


\begin{proof}[Proof sketch]
%Intuitively, Lemma~\ref{lemma:ratl-as-expressive-as-atl} states that the first bit of the evaluation function captures the semantics of \atl.

%We obtain the \ratl formula $\phi^\star$ in a simple manner, similarly to the constructions for embedding \ltl into \rltl~\cite{DBLP:conf/csl/TabuadaN16} and \ctl into \rctl~\cite{rctl}.
We obtain the \ratl formula $\phi^\star$ as follows:
First, we eliminate every implication $\phi_1 \rightarrow \phi_2$ in the \atl formula $\phi$ by replacing it with the expression~$\lnot \phi_1 \lor \phi_2$.
Second, we bring the formula into negation normal form by pushing all negations inwards to the level of atomic propositions.
Finally, we dot all the temporal operators to obtain the \ratl formula~$\phi^\star$.
The claim of Lemma~\ref{lemma:ratl-as-expressive-as-atl} can then be shown by induction over the structure of $\phi$.
\end{proof}

As we have observed above with Example~\ref{example:ratl}, \ratl is able to express basic forms of fairness such as "for a given structure~$\struct$ there exists a strategy for a coalition of agents $A$ such that a certain property $p$ holds infinitely often". Formally this corresponds to the formula $\varphi=\stratDiamonddot{A}\Boxdot p$ with $\eval(s, \varphi) \succeq 0011$.
As shown by Alur, Henzinger, and Kupferman~\cite{DBLP:journals/jacm/AlurHK02}, such a property cannot be expressed in \atl, but rather requires the more expressive logic \atlstar. Indeed, it corresponds to the \atlstar formula $\varphi=\stratDiamonddot{A}\Box\Diamond p$. So, by using the result reported in Lemma~\ref{lemma:ratl-as-expressive-as-atl}, the following holds. 

\begin{theorem}\label{thm:ratl-strictly-more-expressive-atl}
\ratl is strictly more expressive than \atl.
\end{theorem}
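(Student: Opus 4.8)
The plan is to establish the two halves of a strict expressiveness separation, using a common notion of definable property to bridge the Boolean semantics of \atl and the many-valued semantics of \ratl: say that a class~$\mathcal P$ of pointed concurrent game structures is \atl-\emph{definable} if $\mathcal P = \set{(\struct,s) \mid \struct,s \models \phi}$ for some \atl formula~$\phi$, and \ratl-\emph{definable} if $\mathcal P = \set{(\struct,s) \mid \eval(s,\phi) \succeq t}$ for some \ratl formula~$\phi$ and threshold~$t \in \boolfour$. For the ``at least as expressive'' direction I would simply reuse Lemma~\ref{lemma:ratl-as-expressive-as-atl}: given an \atl-definable property witnessed by~$\phi$, the translated \ratl formula~$\phi^\star$ together with the threshold~$1111$ defines the same class, since $\eval(s,\phi^\star) \succeq 1111$ holds exactly when $\eval(s,\phi^\star) = 1111$, which by the lemma is equivalent to $\struct,s \models \phi$. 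Hence every \atl-definable property is \ratl-definable.

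It then remains to exhibit a property that is \ratl-definable but not \atl-definable. I take the fairness property witnessed by $\varphi = \stratDiamonddot{A}\Boxdot p$ at threshold~$0011$, that is, the class $\mathcal F = \set{(\struct,s) \mid \eval(s,\varphi) \succeq 0011}$. By Example~\ref{example:ratl}, each of the three truth values that are $\succeq 0011$ (namely $1111$, $0111$, and $0011$) asserts that the coalition~$A$ has a joint strategy forcing~$p$ to hold at infinitely many positions of every outcome, whereas the strictly smaller values $0001$ and $0000$ do not guarantee this. Consequently $\mathcal F$ coincides precisely with the class of pointed structures satisfying the \atlstar fairness formula $\stratDiamonddot{A}\Box\Diamond p$.

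Finally, I would invoke the result of Alur, Henzinger, and Kupferman~\cite{DBLP:journals/jacm/AlurHK02} that this $\Box\Diamond$ fairness property—one of the canonical specifications separating \atl from \atlstar—is not expressible in \atl. Since $\mathcal F$ is \ratl-definable but not \atl-definable, combining this with the inclusion from the first paragraph yields that \ratl is strictly more expressive than \atl. The only point requiring care is the purely definitional one of fixing the cross-logic notion of definable property and verifying, via Example~\ref{example:ratl}, that the threshold~$\succeq 0011$ captures exactly ``infinitely often''; the genuinely hard ingredient, the inexpressibility of fairness in \atl, is imported wholesale from~\cite{DBLP:journals/jacm/AlurHK02} rather than reproved here, so no separate separation argument is needed.
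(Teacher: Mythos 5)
Your proposal is correct and follows essentially the same route as the paper: embed \atl into \ratl via Lemma~\ref{lemma:ratl-as-expressive-as-atl}, then separate using $\stratDiamonddot{A}\Boxdot p$ at threshold~$0011$, which captures the \atlstar fairness property $\stratDiamonddot{A}\Box\Diamond p$ shown inexpressible in \atl by Alur, Henzinger, and Kupferman. Your explicit formalization of cross-logic definability via threshold classes is a reasonable tidying-up of what the paper leaves implicit, but it is not a different argument.
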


Now, we compare \ratl and \rctl:
%APPENDIX The latter logic is obtained by robustifying \ctl along the same lines as described in Section~\ref{sec:rATL} (see~\cite{rctl} and Appendix~\ref{app:rCTL} for detailed definitions).
The latter logic is obtained by robustifying \ctl along the same lines as described in Section~\ref{sec:rATL} (see~\cite{rctl} for detailed definitions).
Let us just remark that \rctl formulas, as \ctl formulas, are evaluated over Kripke structures by means of a valuation function $\eval_{\rctl}$. 
Thus, to compare the expressiveness of both logics, as usual, we have to interpret a Kripke structure as a (one-agent) concurrent game structure. 
We start by showing that \ratl is at least as expressive as \rctl, just as \atl is at least as expressive as \ctl.
%APPENDIX This fact is formalized below and proven in Appendix~\ref{ratl-as-expressive-as-rctl}.

\begin{lemma} \label{lemma:ratl-as-expressive-as-rctl}
Let $\phi$ be an \rctl formula. Then, there exists an \ratl formula $\phi^\star$ such that for every Kripke structure~$\trans$ the following holds for all states~$s$ of $\trans$: $\eval(s, \phi^\star) = \eval_{\rctl}(s, \phi)$.
\end{lemma}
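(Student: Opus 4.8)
The plan is to exhibit a translation $\phi \mapsto \phi^\star$ and then prove the claimed value-equality by structural induction. The translation recurses through the formula, leaving atomic propositions, the Boolean connectives, and the dotted temporal operators $\Nextdot$, $\Diamonddot$, $\Boxdot$ in place, while replacing the existential path quantifier of \rctl by $\stratDiamonddot{\ag}$ and the universal path quantifier by $\stratDiamonddot{\emptyset}$. Recall that we read a Kripke structure~$\trans$ as a one-agent concurrent game structure by letting the single agent's actions select the successors of each state; under this reading the paths of $\trans$ and of the associated concurrent game structure coincide and the labellings agree. We then show $\eval(s, \phi^\star) = \eval_{\rctl}(s, \phi)$ for all states~$s$.

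The crucial observation is that, in the one-agent setting, the two strategy quantifiers we use degenerate exactly to the robust \rctl path quantifiers. For the grand coalition $A = \ag$, a joint strategy completely determines the action vector at every step, so $\outcome(s, \stratset_{\ag})$ is a singleton; conversely, since strategies may depend on the full history, every path starting in $s$ arises as the outcome of some strategy. Hence the maximal truth value~$b$ that $\ag$ can enforce equals the maximum of $\eval(\pth, \Phi)$ over all paths~$\pth$ starting in $s$, which is precisely the semantics of the robust existential quantifier. For the empty coalition $A = \emptyset$, the set $\outcome(s, \emptyset)$ is the set of \emph{all} paths starting in $s$, so the maximal~$b$ with $\eval(\pth, \Phi) \succeq b$ for every such path equals the minimum of $\eval(\pth, \Phi)$ over all paths~$\pth$, which is precisely the semantics of the robust universal quantifier. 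Both extremal values are attained because $\boolfour$ is finite and totally ordered by~$\succ$.

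With this observation in hand, the induction is routine. For atomic propositions, both logics use the identical two-valued evaluation, so the base case holds. For the Boolean connectives $\lnot$, $\lor$, $\land$, and $\rightarrow$, the semantics over $\boolfour$ are defined in exactly the same way in \rctl and \ratl, so these cases follow directly from the induction hypothesis. For a path formula $\Nextdot \phi'$, $\Diamonddot \phi'$, or $\Boxdot \phi'$, the bit-wise definitions coincide in both logics; applying the induction hypothesis to the state subformula~$\phi'$ at every state along a fixed path shows that the path values agree, i.e.\ $\eval(\pth, \Phi^\star) = \eval_{\rctl}(\pth, \Phi)$. Finally, for a path-quantified formula we combine this last equality, which holds along every path, with the degeneration of the strategy quantifiers established in the previous paragraph.

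The main obstacle is the path-quantifier step. Unlike Lemma~\ref{lemma:ratl-as-expressive-as-atl}, which only needs to preserve the top truth value~$1111$, here we must preserve the \emph{entire} five-valued semantics, so the argument has to verify that the order-theoretic ``maximal~$b$'' definitions of $\stratDiamonddot{\ag}$ and $\stratDiamonddot{\emptyset}$ collapse exactly---and not merely on the Boolean level---to the maximum and minimum of the path values, and that these in turn coincide with the robust \rctl quantifiers. This amounts to carefully unwinding the definition of $\outcome$ in the one-agent case and matching it against the definition of $\eval_{\rctl}$ on the robust path quantifiers.
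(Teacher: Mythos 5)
Your proposal is correct and follows essentially the same route as the paper: the same one-agent reading of the Kripke structure, the same translation sending $\exists$ to $\stratDiamonddot{\set{a}}$ (the grand coalition) and $\forall$ to $\stratDiamonddot{\emptyset}$, and the same key observation that in the one-agent case the outcome sets degenerate so that the ``maximal enforceable $b$'' semantics collapses to the max (resp.\ min) over all paths, matching $\eval_{\rctl}$ exactly. The remaining inductive cases are handled identically in both arguments.
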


\begin{proof}[Proof sketch]
%The construction follows along the same lines as the embedding of \ctl into \atl~\cite{DBLP:journals/jacm/AlurHK02}:
Our construction proceeds as follows:
First, we turn a Kripke structure~$\trans$ into a concurrent game structure with one agent~$a$, having the same states and state labels, a suitable set of actions, and a transition function~$\delta$ such that there is a transition in $\trans$ from $s$ to $s'$ if and only if $s' = \delta(s, \alpha)$ for some action~$\alpha$.
Second, we replace each existential path quantifier~$\exists$ in $\phi$ by $\stratDiamonddot{\set{a}}$ and each universal path quantifier~$\forall$ by $\stratDiamonddot{\emptyset}$, obtaining the \ratl formula~$\phi^\star$.
The claim of Lemma~\ref{lemma:ratl-as-expressive-as-rctl} can then be shown by induction over the structure of $\phi$.
\end{proof}

Now, we recall that Alur, Henzinger, and Kupferman~\cite{DBLP:journals/jacm/AlurHK02} have observed that in \atl there are formulas that cannot be expressed in \ctl. The reason is that, given a concurrent game structure, \ctl  can only reason about a single path (with the existential modality) or all paths (with the universal modality). Conversely, \atl can reason about an arbitrary number of paths by means of strategies. The same argument can be extend to \ratl and \rctl. Thus, by putting together this observation with the statement of Lemma~\ref{lemma:ratl-as-expressive-as-rctl}, the following holds. 

\begin{theorem}\label{thm:ratl-strictly-more-expressive-rctl}
\ratl is strictly more expressive than \rctl.
\end{theorem}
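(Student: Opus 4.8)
The plan is to combine Lemma~\ref{lemma:ratl-as-expressive-as-rctl}, which already establishes that \ratl is at least as expressive as \rctl, with a separating example in the spirit of the argument of Alur, Henzinger, and Kupferman~\cite{DBLP:journals/jacm/AlurHK02} for \atl versus \ctl. The crucial structural fact I would exploit is that the value~$\eval_{\rctl}(s,\psi)$ of an \rctl formula~$\psi$ depends only on the underlying Kripke structure of a concurrent game structure, i.e., on the successor relation~$s \mapsto \set{s' : s' \text{ is a successor of } s}$, and not on how control over the transitions is distributed among the agents. This is immediate from the semantics of \rctl, whose path quantifiers range over single paths or over all paths and are oblivious to the agent/action structure. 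Consequently, whenever two concurrent game structures~$\struct_1$ and $\struct_2$ have identical state sets and the same underlying Kripke structure, every \rctl formula~$\psi$ takes the same value on the same state in both. Hence, to prove strictness it suffices to exhibit two such structures together with a single \ratl formula that distinguishes them.

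Second, I would make the separation concrete using two agents~$a$ and $b$ and three states: an initial state~$s_0$ with $\lab(s_0)=\emptyset$, a state~$s_p$ with $\lab(s_p)=\set{p}$, and a state~$s_q$ with $\lab(s_q)=\emptyset$, where $s_p$ and $s_q$ carry self-loops and $s_0$ has exactly the two successors~$s_p$ and $s_q$. Both structures share this underlying Kripke structure and differ only in who controls the move out of $s_0$: in $\struct_1$ agent~$a$ alone selects the successor of $s_0$ (one of its actions leads to $s_p$, the other to $s_q$, with $b$'s action irrelevant), whereas in $\struct_2$ agent~$b$ alone selects it. Now consider the \ratl formula~$\phi = \stratDiamonddot{\set{a}}\Nextdot p$. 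In $\struct_1$, agent~$a$ can force the successor of $s_0$ to be $s_p$, so every consistent outcome has $\eval(\cdot,\Nextdot p)=1111$ and thus $\eval(s_0,\phi)=1111$; in $\struct_2$, agent~$a$ has no influence, so for every strategy of $a$ there is an outcome on which $b$ moves to $s_q$, where $p$ fails, giving $\eval(s_0,\phi)=0000$. Thus $\phi$ separates $\struct_1$ from $\struct_2$, while by the invariance above no \rctl formula does; hence $\phi$ is not expressible in \rctl, which together with Lemma~\ref{lemma:ratl-as-expressive-as-rctl} yields the strict inclusion.

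I expect the only delicate point to be phrasing the comparison of expressive power cleanly, since \rctl is interpreted over Kripke structures whereas \ratl lives on concurrent game structures. As in the \atl-versus-\ctl setting, the right convention is to regard an \rctl formula as defining a property of a pointed concurrent game structure through its underlying Kripke structure, after which the invariance argument applies verbatim; this is precisely the phenomenon anticipated in the remark preceding the theorem, namely that the coalition modality~$\stratDiamonddot{\set{a}}$ for a proper coalition (with the second agent~$b$ outside) expresses neither ``some path'' nor ``all paths'' and therefore escapes \rctl. The remaining steps are routine: that the self-loops at $s_p,s_q$ and the two outgoing edges at $s_0$ are realised by the action structures of both $\struct_1$ and $\struct_2$ so that the underlying Kripke structures genuinely coincide, the two evaluations of $\phi$ from the semantics of $\stratDiamonddot{\cdot}$ and $\Nextdot$, and the invariance of \rctl values under a change of control, which follows by a straightforward induction on the structure of \rctl formulas.
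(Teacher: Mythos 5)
Your proposal is correct and follows essentially the same route as the paper: the inclusion comes from Lemma~\ref{lemma:ratl-as-expressive-as-rctl}, and strictness comes from the Alur--Henzinger--Kupferman observation that \ctl-style path quantification (a single path or all paths) is oblivious to how control is distributed among agents, which lifts verbatim to \rctl versus \ratl. The only difference is that you instantiate that observation with an explicit pair of concurrent game structures and the witness formula~$\stratDiamonddot{\set{a}}\Nextdot p$, whereas the paper leaves the separating argument at the level of the cited observation; your concrete invariance-under-change-of-control argument is a valid and indeed more self-contained rendering of the same idea.
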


Notice that 
the argument that \ratl formulas expressing fairness properties such as "infinitely often" cannot be expressed in \atl (used in Theorem \ref{thm:ratl-strictly-more-expressive-atl} for the strict containment of \atl in \ratl) can also be applied to \rctl. 
Similarly, the argument used above to show that \ratl formulas cannot be translated into \rctl (used in Theorem \ref{thm:ratl-strictly-more-expressive-rctl} for the strict containment of \rctl in \ratl) can also be applied to \atl. This leads to the following corollary.

\begin{corollary}\label{cor:atl-and-rctl-incomparable}
\atl and \rctl are incomparable.
\end{corollary}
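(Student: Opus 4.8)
The plan is to prove incomparability by exhibiting one property that separates each logic from the other, reusing the two separating witnesses already obtained in Theorems~\ref{thm:ratl-strictly-more-expressive-atl} and~\ref{thm:ratl-strictly-more-expressive-rctl}. Since both \atl and \rctl embed into \ratl (Lemmas~\ref{lemma:ratl-as-expressive-as-atl} and~\ref{lemma:ratl-as-expressive-as-rctl}), and a Kripke structure is read as a one-agent concurrent game structure, I would carry out the comparison inside the common semantic framework of \ratl and argue that neither logic is at least as expressive as the other.

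For the direction that \rctl is not subsumed by \atl, I would reuse the fairness witness underlying Theorem~\ref{thm:ratl-strictly-more-expressive-atl}. The \rltl-style fairness property ``$p$ holds infinitely often along some path'' is expressible in \rctl---for instance as $\exists \Boxdot p$ evaluated at the threshold $0011$---because the branching \rctl semantics of $\Boxdot$ coincides with its \ratl semantics on Kripke structures, and the third bit of $\Boxdot p$ captures ``always eventually''. On the other hand, exactly as recalled for Theorem~\ref{thm:ratl-strictly-more-expressive-atl}, ``infinitely often'' is not \atl-expressible: it already requires \atlstar, and in the branching setting $\exists\Box\Diamond p$ is a \ctlstar property that is not \ctl-expressible, while over one-agent structures \atl collapses to \ctl. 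I would make this rigorous by transporting the pointed structures used for Theorem~\ref{thm:ratl-strictly-more-expressive-atl} that agree on all \atl formulas yet are separated by the fairness formula, so that the same pair witnesses $\rctl \not\subseteq \atl$.

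For the converse direction that \atl is not subsumed by \rctl, I would reuse the strategic witness underlying Theorem~\ref{thm:ratl-strictly-more-expressive-rctl}. That witness is an \atl (hence \ratl) formula of the form $\stratDiamonddot{A}\Nextdot p$ (or $\stratDiamonddot{A}\Diamonddot p$) evaluated at $1111$ in a concurrent game structure with at least two agents, whose truth genuinely depends on a strategy reacting to the opponents and therefore quantifies over an arbitrary number of paths. By the argument of Alur, Henzinger, and Kupferman recalled before Theorem~\ref{thm:ratl-strictly-more-expressive-rctl}, \rctl, like \ctl, can only reason about a single path (existential modality) or all paths (universal modality), and hence cannot capture this property; I would again transport the concrete separating structures from that theorem. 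Combining the two directions yields that neither logic subsumes the other, which is the claim of the corollary.

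The main obstacle, I expect, is not the separating examples themselves---these are inherited from the two preceding theorems---but making the cross-formalism comparison precise: \atl is two-valued and lives over concurrent game structures, whereas \rctl is five-valued and lives over Kripke structures. The delicate step is fixing a single convention under which ``expressible'' is compared (via the embeddings into \ratl, with Kripke structures viewed as one-agent concurrent game structures and with an appropriate threshold on the multi-valued side), and then checking that \emph{both} inherited witnesses remain valid simultaneously under that one convention. Once the convention is pinned down, the two inexpressibility arguments go through essentially verbatim from Theorems~\ref{thm:ratl-strictly-more-expressive-atl} and~\ref{thm:ratl-strictly-more-expressive-rctl}.
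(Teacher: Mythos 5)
Your proposal is correct and follows essentially the same route as the paper: the corollary is obtained by combining the two separating arguments already used for Theorems~\ref{thm:ratl-strictly-more-expressive-atl} and~\ref{thm:ratl-strictly-more-expressive-rctl} --- fairness (``infinitely often'') is \rctl-expressible but not \atl-expressible, while genuine strategic quantification over multi-agent structures is \atl-expressible but beyond \rctl's single-path/all-paths quantifiers. Your extra care about fixing a single comparison convention (via the embeddings into \ratl and reading Kripke structures as one-agent concurrent game structures) is a reasonable tightening of what the paper leaves implicit, but it does not change the argument.
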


\section{Robust ATL*}
Just as one generalizes \ctl, \rctl, and \atl by allowing nesting of temporal operators in the scope of a single path/strategy quantifier (obtaining \ctlstar, \rctlstar, and \atlstar, respectively), we now study \ratlstar, the analogous generalization of \ratl.
Again, we will prove that adding robustness comes for free.

The formulas of \ratlstar are given by the grammar
\begin{align*}
	\phi & \Coloneqq p \mid \neg \phi \mid \phi  \vee \phi \mid \phi \wedge \phi \mid \phi \rightarrow \phi \mid \stratDiamonddot{A}\Phi \mid \stratBoxdot{A}\Phi \\
	\Phi & \Coloneqq \phi \mid \neg \Phi \mid \Phi  \vee \Phi \mid \Phi \wedge \Phi \mid \Phi \rightarrow \Phi \mid \Nextdot\Phi \mid \Diamonddot\Phi \mid  \Boxdot \Phi 
\end{align*}
where $p$ ranges over atomic propositions and $A$ over subsets of agents.
Again, we distinguish between \emph{state formulas} (those derivable from $\varphi$) and \emph{path formulas} (those derivable from $\Phi$).
If not specified, an \ratlstar formula is a state formula.

The semantics of \ratlstar are again defined via an evaluation function~$\eval$ that maps a state formula and a state or a path formula and a path to a truth value in $\boolfour$. 
The cases for state formulas are defined as for \ratl and we define for every path~$\pth$, every state formula~$\phi$, and all path formulas~$\Phi_1$ and $\Phi_2$
\begin{itemize}
    \item $\eval(\pth,\phi) = \eval(\pth[0],\phi)$, 
    \item $\eval(\pth,\neg \Phi ) = \begin{cases} 0000 & \text{if $\eval(\pth,\Phi) = 1111$,} \\ 1111 & \text{if $\eval(\pth, \Phi) \prec 1111$,} \end{cases}$
    \item $\eval(\pth,\Phi_1  \vee \Phi_2 )=\max{\bigl( \eval(\pth, \Phi_1), \eval(\pth, \Phi_2) \bigr)}$,    
    \item $\eval(\pth,\Phi_1 \wedge \Phi_2)=\min{\bigl( \eval(\pth, \Phi_1), \eval(\pth, \Phi_2) \bigr)}$,
    \item $\eval(\pth,\Phi_1 \rightarrow \Phi_2 )= \begin{cases} 1111 & \text{if $\eval(\pth, \Phi_1) \preceq \eval(\pth, \Phi_2)$,} \\ \eval(\pth,\Phi_2) & \text{if $\eval(\pth, \Phi_1) \succ \eval(s, \Phi_2)$,} \end{cases}$
    \item $\eval(\pth,\Nextdot\Phi )= b_1b_2b_3b_4$ with $b_k = \eval(\pth[1],\Phi)[k]$,
    \item $\eval(\pth,\Diamonddot\Phi)=b_1b_2b_3b_4$ with $b_k = \max_{i\ge 0} \eval(\pth[i],\Phi)[k]$, and  
    \item $\eval(\pth,\Boxdot \Phi)=b_1b_2b_3b_4$ where
\begin{align*}
\textstyle	b_1 &=\textstyle \min_{i \ge 0} \eval(\pth[i], \Phi)[1], & 
b_3 &= \textstyle \min_{i \ge 0} \max_{j \ge i} \eval(\pth[j], \Phi)[3],\\
	 b_2 & \textstyle = \max_{i \ge 0} \min_{j \ge i} \eval(\pth[j], \Phi)[2], & 
b_4 &= \textstyle\max_{i \ge 0} \eval(\pth[i], \Phi)[4] ).
\end{align*}
\end{itemize}

We show that every \ratlstar formula (w.r.t.\ a fixed truth value) can be translated into an equivalent \atlstar formula of polynomial size. %\footnote{Note that the same approach is not possible for \ratl because the resulting formula would no longer be in \atl but in \atlstar.}%
This allows us to settle the complexity of \ratlstar model-checking and satisfiability as well as the expressiveness of \ratlstar.
Below, $\models$ denotes the \atlstar satisfaction relation~\cite{DBLP:journals/jacm/AlurHK02}.

\begin{lemma}
\label{lemma:ratlstar2atlstar}
For every \ratlstar formula~$\phi$ and every truth value~$t \in \boolfour$, there is an \atlstar formula~$\phi_t$ such that $\eval(s, \phi) \succeq t$ if and only if $\struct, s \models \phi_t$.
Furthermore, the function mapping $\phi$ and $t$ to $\phi_t$ is polynomial-time computable.
\end{lemma}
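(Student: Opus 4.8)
The plan is to proceed by induction on the structure of the \ratlstar formula~$\phi$, simultaneously constructing, for each subformula and each of the five truth values~$t \in \boolfour$, an \atlstar formula capturing the assertion ``$\eval \succeq t$''. The key observation driving the translation is that the five-valued semantics is entirely determined bit-by-bit, and each bit position~$k \in \set{1,2,3,4}$ of the robust evaluation corresponds exactly to one of the classical \ltl modality patterns ($\Box$, $\Diamond\Box$, $\Box\Diamond$, $\Diamond$) threaded through the temporal operators. Since $\eval(\pth,\Phi) \succeq t$ for $t = b_1b_2b_3b_4$ amounts to asserting the bits of $\eval(\pth,\Phi)$ dominate those of $t$, and because the order on $\boolfour$ is linear with each truth value of the form $0^i1^{4-i}$, the predicate ``$\eval \succeq t$'' reduces to asserting that a specific single bit equals~$1$ (namely the leftmost $1$-bit of~$t$). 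I would therefore define, for each subformula~$\psi$ (state or path) and each bit index~$k$, an \atlstar formula $\psi^{(k)}$ expressing ``$\eval(\cdot,\psi)[k] = 1$,'' and then set $\phi_t$ to be $\phi^{(k)}$ where $k$ is the position of the first $1$ in $t$ (with $\phi_t = \mathit{true}$ when $t = 0000$).

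First I would handle the base cases and Boolean connectives. For an atomic proposition~$p$, every bit agrees: $\eval(s,p)$ is $0000$ or $1111$, so $p^{(k)} = p$ for all~$k$. For negation, the non-standard semantics maps $1111 \mapsto 0000$ and everything else to $1111$; thus $(\lnot\psi)^{(k)}$ is, for each~$k$, the \atlstar formula asserting $\eval(\cdot,\psi) \prec 1111$, i.e.\ $\lnot(\psi^{(1)} \land \psi^{(2)} \land \psi^{(3)} \land \psi^{(4)})$. Disjunction and conjunction translate via $\max$/$\min$ bitwise, so $(\Phi_1 \lor \Phi_2)^{(k)} = \Phi_1^{(k)} \lor \Phi_2^{(k)}$ and dually for~$\land$. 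Implication is the subtle Boolean case: using the case distinction in its semantics, I would express $(\Phi_1 \to \Phi_2)^{(k)}$ by encoding ``$\eval(\Phi_1)\preceq\eval(\Phi_2)$, or else bit~$k$ of $\eval(\Phi_2)$ is~$1$'' — both conjuncts being \atlstar-definable through the already-constructed indicator formulas for the two operands.

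Next I would treat the temporal operators. For $\Nextdot$, the bit semantics is a direct shift, so $(\Nextdot\Phi)^{(k)} = \mathord{\bigcirc}\,\Phi^{(k)}$ in \atlstar. For $\Diamonddot$, each bit is a $\max$ over positions, giving $(\Diamonddot\Phi)^{(k)} = \Diamond\,\Phi^{(k)}$. The heart of the matter is $\Boxdot$, where the four bits carry genuinely different modalities: bit~$1$ gives $\Box\,\Phi^{(1)}$, bit~$2$ gives $\Diamond\Box\,\Phi^{(2)}$, bit~$3$ gives $\Box\Diamond\,\Phi^{(3)}$, and bit~$4$ gives $\Diamond\,\Phi^{(4)}$. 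This is precisely where the seemingly odd five-valued design pays off, and I would verify each $\min$/$\max$ nesting in the definition of the $b_k$ matches the corresponding \atlstar modality applied to the inductively obtained indicator of~$\Phi$. Finally, for the strategy quantifiers, I would lift the semantics of $\stratDiamonddot{A}$ and $\stratBoxdot{A}$ directly: since $\eval(s,\stratDiamonddot{A}\Phi)\succeq t$ says $A$ can enforce $\eval(\pth,\Phi)\succeq t$ on every outcome, this becomes $\stratDiamonddot{A}$ applied to the \atlstar path formula~$\Phi_t$ already built for the appropriate bit; the $\stratBoxdot{A}$ case uses the $\stratBoxdot{A}$ quantifier analogously.

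The main obstacle I anticipate is twofold: carefully justifying the implication case, since the non-self-dual negation and the threshold-style implication mean the indicator formulas $\psi^{(k)}$ for different~$k$ are not independent and must be combined correctly rather than naively bitwise; and controlling the size blowup to obtain the claimed polynomial bound. Because each subformula generates four indicator formulas (one per bit) and the $\Boxdot$ case reuses a single copy of~$\Phi^{(k)}$ under a modality, the total size stays linear per subformula and hence polynomial overall — but I would need to argue that the implication and negation cases, which reference all four bit-indicators of their operands, do not cause exponential duplication. Sharing subformulas (or bounding by noting each $\psi^{(k)}$ is built once and referenced a constant number of times) secures polynomial-time computability. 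The correctness of the whole construction then follows by a routine structural induction verifying $\struct, s \models \phi^{(k)}$ iff $\eval(s,\phi)[k]=1$, from which the statement about $\phi_t$ is immediate.
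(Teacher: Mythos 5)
Your proposal is correct and follows essentially the same route as the paper's proof: a structural induction producing, for each subformula and each threshold, an \atlstar formula, with bitwise clauses for $\lor,\land$, the comparison $\eval(\Phi_1)\preceq\eval(\Phi_2)$ encoded via the already-built indicators for the implication case, $\Nextdot,\Diamonddot$ passed through, the four bits of $\Boxdot$ mapped to $\Box$, $\Diamond\Box$, $\Box\Diamond$, $\Diamond$, and the strategy quantifiers translated directly; your indexing by bit position $k$ is just a reparametrization of the paper's indexing by truth value $t$, since the nonzero truth values are in bijection with the positions of their leftmost $1$. One detail in your favour: your negation clause $\lnot(\psi^{(1)}\land\psi^{(2)}\land\psi^{(3)}\land\psi^{(4)})$ correctly reflects the non-symmetric semantics for every threshold, whereas the paper's literal clause $(\lnot\phi)_t=\lnot\phi_t$ is only accurate at $t=1111$.
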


The \ratlstar model-checking and satisfiability problems are defined as their counterparts for \ratl. 
Both model-checking and satisfiability for \atlstar are $\twoexptime$-complete~\cite{DBLP:journals/jacm/AlurHK02,Schewe}. 
Due to Lemma~\ref{lemma:ratlstar2atlstar}, we obtain the same results for \ratlstar, thereby showing that adding robustness comes indeed for free.

\begin{theorem}\hfill
\label{thm:ratlstarresults}
The \ratlstar model-checking problem and the \ratlstar satisfiability problem are both \twoexptime-complete.
\end{theorem}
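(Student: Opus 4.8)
The plan is to obtain both statements as essentially immediate consequences of Lemma~\ref{lemma:ratlstar2atlstar}, which supplies a polynomial-time computable translation of each \ratlstar formula~$\phi$ together with a threshold~$t \in \boolfour$ into an equivalent \atlstar formula~$\phi_t$, combined with the known \twoexptime-completeness of the two \atlstar problems~\cite{DBLP:journals/jacm/AlurHK02,Schewe}. The only genuine work lies in matching up the decision problems for the upper bounds and in establishing the lower bounds via an embedding in the opposite direction.

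For the upper bounds I would reduce each \ratlstar problem to its \atlstar counterpart. Given an instance~$(\struct, s, \phi, t)$ of \ratlstar model-checking, I would first compute~$\phi_t$ in polynomial time via Lemma~\ref{lemma:ratlstar2atlstar} and then invoke an \atlstar model-checking procedure to decide~$\struct, s \models \phi_t$; by the lemma this holds if and only if $\eval(s, \phi) \succeq t$, and since $\phi_t$ has size polynomial in~$\size{\phi}$, the overall running time stays in \twoexptime. For satisfiability, given~$(\phi, t)$, I would again compute~$\phi_t$ and test \atlstar satisfiability of~$\phi_t$. Here the correctness hinges on the observation that the equivalence $\eval(s,\phi) \succeq t \iff \struct, s \models \phi_t$ holds for \emph{every} structure~$\struct$ and state~$s$, so a witness~$\struct, s$ with $\eval(s,\phi) \succeq t$ exists precisely when $\phi_t$ is \atlstar-satisfiable.

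For the lower bounds I would establish the converse embedding, namely a translation of \atlstar into \ratlstar analogous to Lemma~\ref{lemma:ratl-as-expressive-as-atl}: given an \atlstar formula~$\phi$, eliminate every implication, push all negations down to the atomic propositions (using the De\,Morgan laws, the temporal dualities of $\Next$, $\Diamond$, and $\Box$, and the \atlstar duality of $\stratDiamonddot{A}$ and $\stratBoxdot{A}$), and finally dot all temporal operators to obtain an \ratlstar formula~$\phi^\star$ with $\eval(s, \phi^\star) = 1111$ if and only if $\struct, s \models \phi$. This map is polynomial and reduces \atlstar model-checking to \ratlstar model-checking with threshold~$1111$ and, likewise, \atlstar satisfiability to \ratlstar satisfiability; the \twoexptime-hardness of both \atlstar problems~\cite{DBLP:journals/jacm/AlurHK02,Schewe} then transfers to \ratlstar.

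I expect the main obstacle to be the negation handling in the lower-bound embedding: because the robust negation is not symmetric, dotting an operator that sits underneath a negation is not directly sound, so the reduction must first move to negation normal form at the path-formula level before dotting. Once negations reside only at atomic propositions---which are two-valued under~$\eval$---the inductive argument that the first bit of $\eval(s, \phi^\star)$ tracks the Boolean \atlstar semantics proceeds exactly as in the proof of Lemma~\ref{lemma:ratl-as-expressive-as-atl}, the additional path-formula cases being handled analogously.
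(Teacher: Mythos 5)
Your proposal matches the paper's proof essentially step for step: upper bounds by the polynomial translation of Lemma~\ref{lemma:ratlstar2atlstar} combined with the \twoexptime{} upper bounds for \atlstar, and lower bounds by embedding \atlstar into \ratlstar via implication elimination, negation normal form, and dotting, exactly as the paper generalizes Lemma~\ref{lemma:ratl-as-expressive-as-atl}. One detail you gloss over: since \ratlstar as defined here has no dotted until or release, your embedding only covers the until/release-free fragment of \atlstar, so you need the hardness results to hold already for that fragment --- the paper handles this by citing \twoexptime-hardness of until/release-free \ctlstar satisfiability~\cite{meieretal} and of until/release-free \ltl realizability~\cite{boxesanddiamonds} (the latter combined with the realizability-to-model-checking reduction of~\cite{DBLP:journals/jacm/AlurHK02}).
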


Another consequence of the translation from \ratlstar to \atlstar and the fact that \atlstar is a fragment of \ratlstar is that both logics are equally expressive.

\begin{corollary}
\ratlstar and \atlstar are equally expressive.
\end{corollary}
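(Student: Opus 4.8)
The plan is to establish the two inclusions between the classes of pointed structures definable in each logic, under the natural convention that a \ratlstar formula~$\phi$ together with a threshold~$t \in \boolfour$ defines the class $\set{(\struct, s) \mid \eval(s, \phi) \succeq t}$, whereas an \atlstar formula~$\psi$ defines $\set{(\struct, s) \mid \struct, s \models \psi}$. The two logics are equally expressive precisely when these two collections of classes coincide, so it suffices to show each is contained in the other.

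For the inclusion of \ratlstar into \atlstar I would invoke Lemma~\ref{lemma:ratlstar2atlstar} directly: given any \ratlstar formula~$\phi$ and threshold~$t$, the \atlstar formula~$\phi_t$ it produces satisfies $\eval(s, \phi) \succeq t$ if and only if $\struct, s \models \phi_t$, hence $\phi_t$ defines exactly the same class. This direction therefore requires no new work beyond citing the lemma.

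For the reverse inclusion---that \atlstar is a fragment of \ratlstar---I would lift the construction behind Lemma~\ref{lemma:ratl-as-expressive-as-atl} to the star setting. Given an \atlstar formula~$\psi$, I first eliminate every implication by rewriting $\psi_1 \rightarrow \psi_2$ as $\neg \psi_1 \vee \psi_2$, and then dot every temporal operator, obtaining a \ratlstar formula~$\psi^\star$. I would then prove, by simultaneous induction over state and path formulas, that the first bit of $\eval$ tracks \atlstar satisfaction, i.e.\ $\eval(s, \psi^\star)[1] = 1$ iff $\struct, s \models \psi$ (and analogously on paths). The induction rests on two observations: first, a truth value has first bit~$1$ exactly when it equals $1111$, so that the non-standard \ratlstar negation nevertheless flips the first bit in the classical way; and second, that $\min$, $\max$, and the bit-wise definitions of $\Boxdot$, $\Diamonddot$, $\Nextdot$ restrict on the first bit to conjunction, disjunction, and the usual $\Box$, $\Diamond$, $\Next$, while the first-bit behaviour of $\stratDiamonddot{A}$ and $\stratBoxdot{A}$ coincides with the \atlstar semantics of these quantifiers. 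Taking the threshold $t = 1111$, the pair $(\psi^\star, 1111)$ defines the same class as $\psi$.

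The main obstacle is the implication case of this induction, which is exactly why implications must be removed before dotting. The \ratlstar implication is \emph{not} first-bit faithful: with $\eval(\Phi_1) = 0111 \succ 0011 = \eval(\Phi_2)$ it returns $\eval(\Phi_2) = 0011$, whose first bit is $0$, even though the first bit of $\Phi_1$ is already $0$ and classical implication would be true. Once implications are eliminated, every remaining operator preserves the first-bit correspondence and the induction goes through routinely. Combining this inclusion with the one obtained from Lemma~\ref{lemma:ratlstar2atlstar} shows that the two collections of definable classes coincide, which establishes the corollary.
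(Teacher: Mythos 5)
Your proof is correct and follows essentially the same route as the paper: one direction is exactly Lemma~\ref{lemma:ratlstar2atlstar}, and the other is the embedding of \atlstar into \ratlstar used in the proof of Theorem~\ref{thm:ratlstarresults} (eliminate implications, dot the operators, and show the value is $1111$ iff the formula holds --- which is the same as your first-bit invariant, since a truth value in $\boolfour$ has first bit $1$ exactly when it equals $1111$). The only cosmetic difference is that the paper additionally pushes negations to the atomic propositions, which it itself notes is merely a simplification and not needed, and your counterexample showing that implication is the one operator that is not first-bit faithful is precisely the reason both you and the paper eliminate implications first.
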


%---------- Example ----------
% !Tex root=jelia23.tex

\section{A Practical Example}
\label{sec:example}
%
%\[
%\stratDiamonddot{Supplier}\stratBoxdot{Consumers} (\Boxdot %\bigwedge_{c \in C} req\_ok_c) \rightarrow \Boxdot %stablesupply
%\]
%
Let us consider a smart grid with a set~$U$ of utility companies and a set $C$ of consumers. Assume that for every consumer $c \in C$ there is a proposition ${\ell}_c$ indicating that $c$'s energy consumption is within the pre-agreed limit.
Conversely, $c$'s consumption is higher than the limit if $\ell_c$ is violated.
Furthermore, there is a proposition~``$\mathrm{stable}$'' that holds true if and only if the grid is stable (i.e., the utility companies coordinate to provide the right amount of electricity).

Let us now consider the \atlstar formula
\[
\stratDiamonddot{U}\stratBoxdot{C} (\Box \bigwedge\nolimits_{c \in C} \ell_c) \rightarrow \Box \mathrm{stable}.
\]
This formula expresses that the utility companies~$U$ have a strategy such that no matter how the consumers behave, the following is satisfied: if each consumer's consumption always stays within their limit, then the utility companies keep the grid always stable. 
However, this specification is not robust and provides only limited information when satisfied: even if a single consumer exceeds their limit once, there is no further obligation on the utility companies, and the formula is satisfied independently of whether the grid is always stable or not.

So, let us illustrate how the \ratlstar formula 
\[
\varphi = \stratDiamonddot{U}\stratBoxdot{C} (\Boxdot \bigwedge\nolimits_{c \in C} \ell_c) \rightarrow \Boxdot \mathrm{stable}
\]
does capture robustness.
To this end, assume for now that $\varphi$ evaluates to $1111$. Then, there is a strategy for $U$ such that for all outcomes~$\pi$ that are consistent with that strategy, the following holds: 
\begin{itemize}
    
    \item If $\bigwedge_{c \in C} \ell_c$ holds in every position of $\pi$, i.e.,
    $\Boxdot \bigwedge_{c \in C} \ell_c$ evaluates to $1111$ then by the semantics of $\rightarrow$ the formula $\Boxdot \mathrm{stable}$ also evaluates to $1111$. This means the proposition ``$\mathrm{stable}$'' also holds in every position. Therefore, the grid supply is always stable. Hence, the desired goal 
    is retained when the assumption regarding the consumers holds with no violation. Note that this is equivalent to what the original \atlstar formula above expresses.
    
    \item Assume now that the consumer assumption~$\bigwedge_{c \in C} \ell_c$ is violated finitely many times, i.e., finitely often some consumer violates their consumption limit. This means that the formula $\Boxdot \bigwedge_{c \in C} \ell_c$ evaluates to $0111$. Then, by the semantics of \ratlstar, $\Boxdot \mathrm{stable}$ evaluates to $0111$ or higher, which means that ``$\mathrm{stable}$'' holds at every state, except for a finite number of times. So, the degree of violation of the guarantee required by $U$ is at most the degree of violation of the assumptions on the consumers.
    
    \item Similarly, if $\Boxdot \bigwedge_{c \in C} \ell_c$ holds  infinitely (finitely) often, then $\Boxdot \mathrm{stable}$ holds infinitely (finitely) often. 

\end{itemize}
If the formula~$\varphi$ evaluates to $1111$, then $U$ has a strategy that does not behave arbitrarily in case the assumption~$\Boxdot \bigwedge_{c \in C} \ell_c$ fails, but instead satisfies the guarantee~$\Boxdot \mathrm{stable}$ to at least the same degree that the guarantee holds.

Finally, even if $\varphi$ evaluates to a truth value $t \prec 1111$, this reveals crucial information about $U$'s ability to guarantee a stable grid, i.e., the premise $\Boxdot \bigwedge_{c \in C} \ell_c$ evaluates to some truth value~$t' \succ t$ while the conclusion~``$\mathrm{stable}$'' evaluates to $t$.

\section{Discussion and Future Work}
\label{sec:conc}

This paper introduces \ratl and \ratlstar, the first logic formalisms able to deal with robust strategic reasoning in multi-agent systems. As we have shown along the paper, \ratl results to be very expressive, useful in practice, and not more costly than the subsumed logics \atl and \rctl. Similarly, \ratlstar is not more costly than the subsumed logic \atlstar.

The positive results about \ratl represent the foundation for a number of useful extensions, mainly by extending robustness to logics for strategic reasoning that are more expressive than \atl and \atlstar such as Strategy Logic \cite{DBLP:journals/tocl/MogaveroMPV14} and the like. 
Notably, Strategy Logic is much more expressive than \atlstar \cite{DBLP:journals/jacm/AlurHK02}. Indeed it can express several game-theoretic concepts including Nash Equilibria over \ltl goals. Interestingly, the formula expressing Nash Equilibria uses an implication. In words the formula says that $n$ agents' strategies $\sigma_1, \ldots, \sigma_n$ form an equilibrium if, for every agent, it holds that whenever by unilaterally changing her strategy the goal is also satisfied, then it \emph{implies} that the goal is satisfied with the original tuple of strategies as well. Robustness in Strategy Logic (by means of \rltl goals in place of \ltl) then allows to define a stronger notion of Nash Equilibrium.

Another interesting direction for future work is to come up with an implementation of the model-checking procedure for \ratl, possibly by extending existing tools such as MCMAS~\cite{MCMAS,CLM15}.

\paragraph*{Acknowledgments.}
This research has been supported by the PRIN project RIPER (No. 20203FFYLK), the PNRR MUR project PE0000013-FAIR, the InDAM project ``Strategic Reasoning in Mechanism Design'', and DIREC - Digital Research Centre Denmark.
Furthermore, this work has been financially supported by Deutsche Forschungsgemeinschaft, DFG Project numbers 434592664 and 459419731, and the Research Center Trustworthy Data Science
and Security (https://rc-trust.ai), one of the Research Alliance centers within the UA Ruhr (https://uaruhr.de).

%---------- Bibliography ----------
\bibliographystyle{splncs04}
\bibliography{bib}

%---------- Appendix ----------
\clearpage
\appendix
% !Tex root=jelia23.tex

\section{Proof of Theorem~\ref{thm:ratlsat}}
Recall that Theorem~\ref{thm:ratlsat} states that \ratl satisfiability is \exptime-complete.

\begin{proof}[Proof of Theorem~\ref{thm:ratlsat}]
For the upper bound we follow a reasoning similar to the one used for the satisfiability for \rctl \cite{rctl}, opportunely extended to deal with \atl. 
Precisely, \ratl satisfiability can
be solved by translating a given \ratl formula and a given truth value into an equivalent alternating $\mu$-calculus formula (see de~Alfaro and Henzinger~\cite{concomegagames} and de~Alfaro, Henzinger, and Majumdar~\cite{de2001verification}) of linear size and then checking the resulting formula for satisfiability. The procedure to which we refer is used to translate an \atlstar formula $\varphi$ into an equivalent alternating $\mu$-calculus formula $\varphi'$, and relies on the \nrsbc games introduced in Subsection~\ref{subsec:modelchecking}. 

Note that the  translation from \atlstar to alternating $\mu$-calculus is in general exponential in the maximal number of nested temporal operators. More precisely, let $\Sigma =\set{\cNext,U, R,\Diamond,\Box}$ the set of all classical temporal operators used in \atlstar. 
The translation is exponential in the length $i$ of the longest path in the syntax tree labeled by operators in $\Sigma$.
In our specific case of $\varphi$ being an \ratl formula, we have $i=2$ since the interpretations of the temporal operators $\Nextdot, \Diamonddot$, and $\Boxdot$ correspond to a nesting of at most two classical temporal operators in $\Sigma$.
%The translation is exponential in $i=\max\set{|w|  \mid  w \in \Sigma^+ \text{ and is a substring of } \varphi}$\todo{AM: substring vs binary operators (see my email)}.
%In our specific case of $\varphi$ being an \ratl formula, $i=2$ by just remembering the interpretations of the temporal operators $\Nextdot, \Diamonddot$, and $\Boxdot$ by means of the (nesting of) classical temporal operators in $\Sigma$.
Since the satisfiability problem for the alternating $\mu$-calculus is \exptime-complete \cite{DBLP:journals/jacm/AlurHK02}, \ratl satisfiability is in \exptime.

A matching lower bound already holds for \ctl \cite{ctlsat}, and thus also for \ratl. Aain, the lower bound for \ctl already holds for the fragment without until and release~\cite{meieretal}.
\end{proof}

\section{Proof of Lemma~\ref{lemma:ratl-as-expressive-as-atl}}
\label{app:proof-ratl-as-expressive-as-atl}

In order to prove Lemma~\ref{lemma:ratl-as-expressive-as-atl}, we have to introduce the logic \atl first~\cite{DBLP:journals/jacm/AlurHK02}.
We then show the statement by induction.

% The logic \atl
\paragraph{The logic \atl}
The syntax of \atl is identical to that of \ratl as introduced in Section~\ref{sec:rATL}, except for ``un-dotted'' temporal operators.
More precisely, its syntax is given by the grammar
\begin{align*}
	\phi & \Coloneqq p \mid \neg \phi \mid \phi \vee \phi \mid \phi \land \phi \mid \stratDiamond{A}\Phi \mid \stratBox{A}\Phi \\
	\Phi & \Coloneqq \Next\phi \mid \Diamond\phi \mid  \Box \phi,
\end{align*}
where $p$ ranges over atomic propositions and $A$ ranges over subsets of agents.
As usual, \atl distinguishes between state formulas (those derivable from $\varphi$) and path formulas (those derivable from $\Phi$).
If not specified otherwise, an \atl formula is a state formula.
    
Note that we have included conjunction and the $\stratBox{~}$-operator explicitly in the syntax.
We do this because we will later only consider \atl formulas in negation normal form.
%Note that we here omitted conjunction and implication as they can be recovered as syntactic sugar using De\,Morgan's laws.
%Moreover, do not need the $\stratBox{~}$-operator because it can be recovered via $\stratBox{A} \Next \varphi \equiv \lnot \stratDiamond{A} \Next \lnot \varphi$, $\stratBox{A} \Diamond \varphi \equiv \lnot \stratDiamond{A} \Box \lnot \varphi$, and $\stratBox{A} \Box \varphi \equiv \lnot \stratDiamond{A} \Diamond \lnot \varphi$.
Furthermore, let us  mention that we do not consider the until-operator since we have defined \ratl without it.

The semantics of ATL is defined in terms of a satisfaction relation $\models$ that relates the states\slash{}paths of a concurrent game structure with all ATL formulas that are satisfied at that state.
In what follows, let $\struct = (\states, \ag, \ac, \tr, \lab)$ be a concurrent game structure, $s$ a state of $\struct$, $\pth$ a path of $\struct$, and $S \subseteq \ag$ a set of agents.
\begin{align*}
    \struct, s \models p & \Leftrightarrow p \in \ell(s) \\
    \struct, s \models \lnot \varphi & \Leftrightarrow \struct, s \not\models \varphi, \\
    \struct, s \models \varphi_1 \lor \varphi_2 & \Leftrightarrow \struct, s \models \varphi_1 \text{ or } \struct, s \models \varphi_2, \\
    \struct, s \models \varphi_1 \land \varphi_2 & \Leftrightarrow \struct, s \models \varphi_1 \text{ and } \struct, s \models \varphi_2, \\
    \struct, s \models \stratDiamond{A} \Phi & \Leftrightarrow \text{there is a set~$\stratset_A$ of strategies, one for} \\
    & \phantom{{}\Leftrightarrow{}} \text{each agent in $A$, such that for all paths} \\
    & \phantom{{}\Leftrightarrow{}} \text{$\pth \in \outcome(s, \stratset_A)$ we have $\struct, \pth \models \Phi$}, \\
    \struct, s \models \stratBox{A} \Phi & \Leftrightarrow \text{for every set~$\stratset_A$ of strategies, one for} \\
    & \phantom{{}\Leftrightarrow{}} \text{each agent in $A$, there exists a path} \\
    & \phantom{{}\Leftrightarrow{}} \text{$\pth \in \outcome(s, \stratset_A)$ such that $\struct, \pth \models \Phi$}. \\
    \intertext{Moreover, we have}
    \struct, \pth \models \Next \varphi & \Leftrightarrow \struct, \pth[1] \models \varphi \\
    \struct, \pth \models \Diamond \varphi & \Leftrightarrow \text{$\struct, \pth[i] \models \varphi$ for an $i \geq 0$}, \\
    \struct, \pth \models \Box \varphi & \Leftrightarrow \text{$\struct, \pth[i] \models \varphi$ for all $i \geq 0$}.
\end{align*}

We are now ready to prove Lemma~\ref{lemma:ratl-as-expressive-as-atl}.

% Proof
\paragraph{Proof}
Recall the statement of Lemma~\ref{lemma:ratl-as-expressive-as-atl}:
for every \atl formula $\phi$, there exists an \ratl formula $\phi^\star$ such that for every concurrent game structure~$\struct$ and every state~$s$ of $\struct$ we have $\eval(s, \phi^\star) = 1111$ if and only if $\struct, s \models \phi$.

Without loss of generality, we assume in the remainder that the \atl formula~$\varphi$ is in negation normal form (i.e., the only Boolean operators are negation, disjunction, and conjunction, and negation does only appear at atomic propositions).
We make this assumption because the semantics of the implication does not easily map to the semantics of its non-robust counterpart.
Moreover, pushing negations to the atomic propositions slightly simplifies the proof of Lemma~\ref{lemma:ratl-as-expressive-as-atl}.
It is worth noting, however, that this is a minor deviation from the literature (e.g., Nayak et al.~\cite{rctl} only removed implications but did not go to negation normal form).

If $\varphi$ is not in negation normal form, we can easily transform it as mentioned in the proof sketch of Lemma~\ref{lemma:ratl-as-expressive-as-atl} in Subsection~\ref{subsec_expressiveness}:
\begin{enumerate}
    \item We eliminate every implication $\phi_1 \rightarrow \phi_2$ in $\varphi$ by replacing it with the expression~$\lnot \phi_1 \lor \phi_2$.
    \item We push all negations inwards to the level of atomic propositions.
\end{enumerate}
It is not hard to verify that these two steps indeed transform any \atl formula into an equivalent \atl formula in negation normal form.

To construct the \ratl formula $\varphi^\star$, we apply the following recursive transformation $t$, which essentially just ``dots'' the temporal operators.
For state formulas, we define
\begin{align*}
    t(p) & \coloneqq p, \\
    t(\lnot p) & \coloneqq \lnot p, \\
    t(\varphi_1 \lor \varphi_2) & \coloneqq t(\varphi_1) \lor t(\varphi_2), \\
    t(\varphi_1 \land \varphi_2) & \coloneqq t(\varphi_1) \land t(\varphi_2), \\
    t(\stratDiamond{A} \Phi) & \coloneqq \stratDiamonddot{A} t(\Phi), \\
    t(\stratBox{A} \Phi) & \coloneqq \stratBoxdot{A} t(\Phi). \\
    \intertext{For path formulas, we define}
    t(\Next \varphi) & \coloneqq \Nextdot t(\varphi), \\
    t(\Diamond \varphi) & \coloneqq \Diamonddot t(\varphi), \\
    t(\Box \varphi) & \coloneqq \Boxdot t(\varphi). \\
\end{align*}
Given an \atl formula $\varphi$ in negation normal form, we then simply set $\varphi^\star \coloneqq t(\varphi)$.
Note that $\varphi$ and $\varphi^\star$ are structurally identical, except for the ``dotting'' of temporal operators.

To prove Lemma~\ref{lemma:ratl-as-expressive-as-atl}, we show the following, slightly more technical statement.

\begin{lemma} \label{lem:slightly-stronger-than-lem2}
For every \atl state formula $\phi$ in negation normal form, every \atl path formula $\Phi$ in negation normal form, every concurrent game structure~$\struct$, every state~$s$ of $\struct$, and every path $\pth$ of $\struct$, the transformed formulas $t(\varphi)$ and $t(\Phi)$ satisfy
\begin{enumerate}
    \item \label{itm:proof:lemma:ratl-as-expressive-as-atl:1}
    $\eval(s, t(\phi)) = 1111$ if and only if $\struct, s \models \phi$; and
    \item \label{itm:proof:lemma:ratl-as-expressive-as-atl:2}
    $\eval(\pth, t(\Phi)) = 1111$ if and only if $\struct, \pth \models \Phi$.
\end{enumerate}
\end{lemma}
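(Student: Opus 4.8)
\noindent
The plan is to establish both claims simultaneously by structural induction: claim~(\ref{itm:proof:lemma:ratl-as-expressive-as-atl:1}) for a state formula will invoke claim~(\ref{itm:proof:lemma:ratl-as-expressive-as-atl:2}) for its strictly smaller path subformula, and claim~(\ref{itm:proof:lemma:ratl-as-expressive-as-atl:2}) for a path formula will invoke claim~(\ref{itm:proof:lemma:ratl-as-expressive-as-atl:1}) for the enclosed state formula, so the recursion is well-founded. The single observation driving every case is that the five truth values in $\boolfour$ have non-decreasing bits, i.e.\ $b[1] \le b[2] \le b[3] \le b[4]$ for every $b \in \boolfour$; consequently $b = 1111$ if and only if $b[1] = 1$. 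Hence the predicate ``$\eval(\cdot, t(\cdot)) = 1111$'' is always equivalent to ``the first bit of the value equals $1$'', and it suffices to track how each dotted operator acts on the first bit.

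For the base cases I would use that $\eval(s, p) \in \set{0000, 1111}$ by definition, so $\eval(s, t(p)) = \eval(s,p) = 1111$ iff $p \in \lab(s)$, and the case $t(\lnot p) = \lnot p$ follows since $\eval(s, \lnot p) = 1111$ iff $\eval(s,p) \prec 1111$ iff $p \notin \lab(s)$. For disjunction and conjunction I would use that $\max$ (respectively $\min$) of two truth values equals $1111$ iff at least one (respectively both) argument equals $1111$, which matches the Boolean \atl clauses through the induction hypothesis. The three path-formula cases are equally direct once restricted to the first bit: by the semantics, the first bit of $\eval(\pth, \Nextdot t(\phi))$, $\eval(\pth, \Diamonddot t(\phi))$, and $\eval(\pth, \Boxdot t(\phi))$ is $\eval(\pth[1], t(\phi))[1]$, $\max_{i \ge 0}\eval(\pth[i], t(\phi))[1]$, and $\min_{i \ge 0}\eval(\pth[i], t(\phi))[1]$, respectively. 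Combined with the first-bit observation and the induction hypothesis for $\phi$, these reproduce exactly the \atl semantics of $\Next$, $\Diamond$, and $\Box$.

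The two strategy-quantifier cases are where I would be most careful. Since $1111$ is the top element of the total order $\succeq$ and the defining conditions of $\eval(s, \stratDiamonddot{A}\Phi)$ and $\eval(s, \stratBoxdot{A}\Phi)$ are downward closed in $b$, the maximal witnessing value equals $1111$ exactly when $b = 1111$ is itself a witness. Thus $\eval(s, \stratDiamonddot{A} t(\Phi)) = 1111$ iff there is a strategy set~$\stratset_A$ with $\eval(\pth, t(\Phi)) = 1111$ for all $\pth \in \outcome(s, \stratset_A)$, and $\eval(s, \stratBoxdot{A} t(\Phi)) = 1111$ iff for every~$\stratset_A$ some $\pth \in \outcome(s, \stratset_A)$ satisfies $\eval(\pth, t(\Phi)) = 1111$. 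Applying claim~(\ref{itm:proof:lemma:ratl-as-expressive-as-atl:2}) to $\Phi$ then turns these into precisely the defining clauses of $\struct, s \models \stratDiamond{A}\Phi$ and $\struct, s \models \stratBox{A}\Phi$.

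The main obstacle I anticipate is justifying the reduction to the first bit uniformly: one must confirm that every tuple produced by the dotted operators again lies in $\boolfour$, so that ``first bit $= 1$'' genuinely certifies the value $1111$ rather than some non-monotone tuple such as $1000$. This hinges on the non-decreasing-bit property being preserved by the $\min$/$\max$ and $\liminf$/$\limsup$ expressions in the semantics of $\Boxdot$, which follows from the pointwise ordering of the bits of the inner values. Once this closure property is in hand, the bookkeeping in each inductive case is routine.
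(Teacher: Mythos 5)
Your proposal is correct and follows essentially the same route as the paper: a structural induction over formulas in negation normal form, where each case reduces to the fact that $1111$ is the maximal truth value (equivalently, the only element of $\boolfour$ whose first bit is $1$), and the strategy-quantifier cases use that the defining condition is downward closed so the maximum is $1111$ exactly when $1111$ itself is a witness. The only presentational difference is that you frontload the ``first bit'' reduction and the closure of $\boolfour$ under the dotted operators as a uniform principle, whereas the paper argues each case directly from ``$1111$ is the largest truth value''; the substance is the same.
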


\begin{proof}[Proof of Lemma~\ref{lem:slightly-stronger-than-lem2}]
We prove the statement by induction over the structure of $\varphi$.

For the base case, let $\varphi = p$.
We first observe that $t(\varphi) = p$ as well.
Hence, we have $\struct, s \models p$ and $\eval(s, p) = 1111$ if $p \in \ell(s)$ by the definitions of the \atl and \ratl semantics (recall that $\ell$ is the  labeling function of the concurrent game structure).
Conversely, $\struct, s \not\models p$ and $\eval(s, p) = 0000$ if $p \notin \ell(s)$.
Thus, the claim holds.

For the induction step, we make a case distinction:
\begin{itemize}
    \item Let $\varphi = \lnot \varphi_1$.
    Since $\varphi$ is in negation normal form, we know that negation can only appear at the level of atomic propositions (i.e., $\varphi_1 = p$).
    Therefore, $\varphi = \lnot p$ and $t(\varphi) = \lnot p$.
    
    We now distinguish between $p \in \ell(s)$ and $p \notin \ell(s)$.
    If $p \not\in \ell(s)$, then $\eval(s, p) = 0000$ and $\eval(s, \lnot p) = 1111$ by definition of the \ratl semantics.
    Similarly, we have $\struct, s \not\models p$ and $\struct, s \models \lnot p$ by definition of the \atl semantics.

    On the other hand, if $p \in \ell(s)$, then $\eval(s, p) = 1111$ and $\eval(s, \lnot p) = 0000$.
    Similarly, we have $\struct, s \models p$ and $\struct, s \not\models \lnot p$.
    \item Let $\varphi = \varphi_1 \lor \varphi_2$ and, hence, $t(\varphi) = t(\varphi_1) \lor t(\varphi_2)$.
    By induction hypothesis, we know that $\eval(s, t(\varphi_1)) = 1111$ if and only if $\struct, s \models \varphi_1$ and $\eval(s, t(\varphi_2)) = 1111$ if and only if $\struct, s \models \varphi_2$.
    
    We now distinguish the two cases $\eval(s, t(\varphi)) = 1111$ and $\eval(s, t(\varphi)) \neq 1111$.
    If $\eval(s, t(\varphi)) = 1111$, then $\eval(s, t(\varphi_1)) = 1111$ or $\eval(s, t(\varphi_2)) = 1111$ because $\eval(s, t(\varphi)) = \max{\bigl\{ \eval(s, t(\varphi_1)), \eval(s, t(\varphi_2)) \bigr\}}$ by definition of the \ratl semantics.
    Applying the induction hypothesis then yields $\struct, s \models \varphi_1$ or $\struct, s \models \varphi_2$.
    Thus, $\struct, s \models \varphi$.
    
    Similarly, if $\eval(s, t(\varphi)) \neq 1111$, then $\eval(s, t(\varphi)) \prec 1111$ because $1111$ is the largest truth value in $\boolfour$.
    Thus, $\eval(s, t(\varphi_1)) \prec 1111$ and $\eval(s, t(\varphi_2)) \prec 1111$ by definition of the \ratl semantics, showing that $\eval(s, t(\varphi_1)) \neq 1111$ and $\eval(s, t(\varphi_2)) \neq 1111$.
    Applying the induction hypothesis then yields $\struct, s \not\models \varphi_1$ and $\struct, s \not\models \varphi_2$, implying $\struct, s \not\models \varphi$ since $\varphi = \varphi_1 \lor \varphi_2$.
    \item Let $\varphi = \varphi_1 \land \varphi_2$ and, hence, $t(\varphi) = t(\varphi_1) \land t(\varphi_2)$.
    Since this case in analogous to $\varphi = \varphi_1 \land \varphi_2$ (with $\min$ used for $\max$), we skip it here.
    \item For a set $A \subseteq \ag$ of agents, let $\varphi = \stratDiamond{A} \Phi$ and, hence, $t(\varphi) = \stratDiamonddot{A} t(\Phi)$.
    
    If $\eval(s, t(\varphi)) = 1111$, then there exists a set~$\stratset_A$ of strategies, one for each agent in $A$, such that for all paths~$\pth \in \outcome(s, \stratset_A)$ we have $\eval(\pth, t(\Phi)) \succeq 1111$.
    In particular, this means that $\eval(\pth, t(\Phi)) = 1111$ holds for all such paths~$\pi$ since $1111$ is the largest truth value in $\boolfour$.
    By applying the induction hypothesis, we also know that the strategies in $\stratset_A$ ensure $\struct, \pth \models \Phi$ for all paths~$\pth \in \outcome(s, \stratset_A)$.
    Thus, $\struct, s \models \varphi$ since $\varphi = \stratDiamond{A} \Phi$.
    
    On the other hand, if $\eval(s, t(\varphi)) \neq 1111$, then $\eval(s, t(\varphi)) \prec 1111$.
    This means that for every set $\stratset_A$ of strategies, there exists a path~$\pth \in \outcome(s, \stratset_A)$ with $\eval(\pth, t(\Phi)) \prec 1111$ (in particular, $\eval(\pth, t(\Phi)) \neq 1111$).
    Applying the induction hypothesis then yields that for every set $\stratset_A$ of strategies, there exists a path~$\pth \in \outcome(s, \stratset_A)$ with $\struct, \pth \not\models \Phi$.
    Hence, $\struct, s \not\models \varphi$.
    \item For a set $A \subseteq \ag$ of agents, let $\varphi = \stratBox{A} \Phi$ and, hence, $t(\varphi) = \stratBoxdot{A} t(\Phi)$.
    Since this case in analogous to $\varphi = \stratDiamond{A} \Phi$ (with the case distinction for $\eval(s, t(\varphi)) = 1111$ and $\eval(s, t(\varphi)) \neq 1111$ swapped), we skip it here.
    \item Let $\varphi = \Next \varphi_1$ and, hence, $t(\varphi) = \Nextdot t(\varphi_1)$.
    
    Since $\eval(\pth, \Nextdot t(\phi_1)) = b_1 b_2 b_3 b_4$ with $ b_k = \eval(\pth[1], t(\varphi))[k]$ by definition of the \ratl semantics, we have $\eval(\pth, \Nextdot t(\varphi_1)) = 1111$ if and only if $\eval(\pth[1], t(\varphi_1)) = 1111$.
    By induction hypothesis, this is equivalent to $\struct, \pth[1] \models \varphi_1$ and, by definition of the \atl semantics, $\struct, \pth \models \varphi$ since $\varphi = \Next \varphi_1$.
    \item Let $\varphi = \Diamond \varphi_1$ and, hence, $t(\varphi) = \Diamonddot t(\varphi_1)$.
    
    Since $\eval(\pth, \Diamonddot t(\varphi_1)) = b_1b_2b_3b_4$ with $b_k = \max_{i \ge 0} \eval(\pth[i], t(\varphi_1))[k]$ by definition of the \ratl semantics, we know that $\eval(\pth, \Diamonddot t(\varphi_1)) = 1111$ if and only if there exists an $i \geq 0$ with $\eval(\pth[i], t(\varphi_1)) = 1111$.
    By induction hypothesis, this is equivalent to $\struct, \pth[i] \models \varphi_1$ and, by definition of the \atl semantics, $\struct, \pth \models \varphi$ since $\varphi = \Diamond \varphi_1$.
    \item Let $\varphi = \Box \varphi_1$ and, hence, $t(\varphi) = \Boxdot t(\varphi_1)$.
    
    By definition of the \ratl semantics, we know that $\eval(\pth, \Boxdot t(\varphi_1))$ is a truth value $b_1 b_2 b_3 b_4 \in \boolfour$ with $b_1 = \min_{i \ge 0} \eval(\pth[i], t(\phi_1))[1]$.
    Thus, $\eval(\pth, \Boxdot t(\varphi_1)) = 1$ can only hold if $\eval(\pth[i], t(\phi_1))[1] = 1$ holds for all $i \geq 0$.
    Since $1111$ is the largest truth value, this is equivalent to the statement that $\eval(\pth[i], t(\phi_1)) = 1111$ holds for all $i \geq 0$.
    Moreover, by applying the induction hypothesis, we obtain that $\struct, \pth[i] \models \varphi_1$ holds for all $i \geq 0$.
    Thus, $\struct, \pth \models \varphi$ since $\varphi = \Box \varphi_1$.
\end{itemize}
This concludes the proof.
\end{proof}

Lemma~\ref{lemma:ratl-as-expressive-as-atl} now follows immediately from Item~\ref{itm:proof:lemma:ratl-as-expressive-as-atl:1} of Lemma~~\ref{lem:slightly-stronger-than-lem2}, the fact that every \atl formula can be transformed into negation normal form, and the fact that $\varphi^\star = t(\varphi)$.

%---------- Definition of rCTL ----------
\section{Definition of rCTL}
\label{app:rCTL}
For the reader's convenience, we here repeat the definition of rCTL as introduced by Nayak et al.~\cite{rctl}.

% Syntax of rCTL
\paragraph{Syntax of \rctl}
Formulas of \rctl are classified into state and path formulas.
\rctl state formulas are formed according to the grammar
\[ \varphi \Coloneqq p \mid \neg \varphi \mid \varphi \vee \varphi \mid \varphi \wedge \varphi \mid \varphi \rightarrow \varphi \mid \exists \Phi \mid \forall \Phi,\]
where $p\in \ap$ is an atomic proposition and $\Phi$ is a path formula.
On the other hand, rCTL path formulas are derived from the grammar
%\[\normalfont \varphi \Coloneqq \Nextdot \Phi \mid \Phi \Untildot \Phi \mid \Diamonddot \Phi \mid \Boxdot \Phi.\]
\[ \Phi \Coloneqq \Nextdot \varphi \mid \Diamonddot \varphi \mid \Boxdot \varphi.\]
If not stated otherwise, an \rctl formula is a state formula.
Again, we omit the until-operator for the sake of simplicity since our definition of \ratl does not contain it.

% Kripke structures
\paragraph{Kripke structures}
An rCTL formula is evaluated on a Kripke structure, a mathematical object modeling a system.
Following the notation of Nayak et al.~\cite{rctl}, a Kripke structure over a set $\ap$ of atomic propositions is a tuple $\trans = (S, I, R, L)$ where $S$ is a finite set of states, $I\subseteq S$ is the set of initial states, $R\subseteq S\times S$ is the a transition relation, and $L\colon S\rightarrow 2^{\ap}$ is the labeling function.
Without loss of generality, we assume that for all states $s \in S$, there exists a state $s'$ satisfying $(s,s')\in R$.

A path of the Kripke structure $\trans$ is an infinite sequence of states $\pi = s_0s_1\cdots$ such that $(s_i, s_{i+1}) \in R$ for every $i \geq 0$.
Moreover, let $\mathit{paths}(s)$ denote the set of all paths starting from state $s \in S$.
Finally, for a path $\pi$ and $i\geq 0$, we use $\pi[i]$ to denote the $i$-th state of $\pi$ and $\pi[i..]$ to denote the suffix of $\pi$ starting at index $i$.

% Semantics of rCTL
\paragraph{Semantics of rCTL}
The rCTL semantics is a mapping $V_\mathit{CTL}$ that assigns an element of $\boolfour$ to every pair of state and state formula and to every pair of path and path formula.
Let us begin with state formulas, where $s \in S$ is a state and $p \in AP$ is an atomic proposition:
%\todo[]{we should mention that boolean ops are as for rATL (and rLTL for that matter)}
\begin{align*}
	 V_{\rctl}(s,p) & = \begin{cases}
        0000 & \text{if $p\not \in L(s)$; and} \\
        1111 & \text{if $p\in L(s)$,}
	\end{cases} \\
    V_{\rctl}(s,\neg \Phi) & = \begin{cases}
        0000 & \text{if $V_{\rctl}(s,\phi) = 1111$; and} \\
        1111 & \text{if $V_{\rctl}(s, \phi) \prec 1111$,}
    \end{cases}\\
    V_{\rctl}(s,\Phi \vee \Psi) & = \max\{V_{\rctl}(s,\Phi), V_{\rctl}(s,\Psi)\}, \\
    V_{\rctl}(s,\Phi \wedge \Psi) & = \min\{V_{\rctl}(s,\Phi), V_{\rctl}(s,\Psi)\}, \\
    V_{\rctl}(s,\Phi \rightarrow \Psi) & = 
    V_{\rctl}(s,\Phi) \Rightarrow V_{\rctl}(s,\Psi).
    \intertext{where, for two truth values $a, b \in \boolfour$, we have}
    a\Rightarrow b & = \begin{cases}
        1111 &\text{ if $a\preceq b$; and}\\
        b &\text{ if $a \succ b$.}
    \end{cases}
\end{align*}

For the existential and universal path quantification, the definition of $V_{\rctl}$ is as follows:
\begin{align*}
    V_{\rctl}(s,\exists \varphi) & = \max_{\pi \in \mathit{paths}(s)} V_{\rctl}(\pi,\varphi),\\
    V_{\rctl}(s,\forall \varphi) & = \min_{\pi \in \mathit{paths}(s)} V_{\rctl}(\pi,\varphi).
\end{align*}

Finally, the semantics for path formulas is as follows:
\begin{align*}
    V_{\rctl}(\pi,\Nextdot \Phi) & = V_{\rctl}(\pi[1], \Phi),\\
    %
    % V_{\rctl}(\pi,\Phi \Untildot \Psi) & = \begin{multlined}[t]
    %     \max_{j\geq 0} \min \{V_{\rctl}(\pi[j],\Psi), \\
    %     \min_{0\leq i < j} V_{\rctl}(\pi[i],\Phi)\},
    % \end{multlined} \\
    %
    V_{\rctl}(\pi,\Diamonddot \Phi) & = \max_{i\geq 0} V_{\rctl}(\pi[i],\Phi),\\
    V(\pi,\Boxdot \Phi) & = b_1 b_2 b_3 b_4, \\
    \intertext{where}
    b_1 & = \min_{i\geq 0} V_{\rctl}(\pi[i], \varphi)[1],\\
    b_2 & = \max_{j\geq 0} \min_{i\geq j} V_{\rctl}(\pi[i], \varphi)[2],\\
    b_3 & = \min_{j\geq 0} \max_{i\geq j} V_{\rctl}(\pi[i], \varphi)[3],\\
    b_4 & = \max_{i\geq 0} V_{\rctl}(\pi[i], \varphi)[4].
\end{align*}
We omit the semantics of the Boolean operators as it is same we have given for rATL.
Recall that for a truth value $b = b_1 b_2 b_3 b_4 \in \boolfour$ and $i \in \{ 1, \ldots, 4\}$, we use $b[i]$ to denote the bit $b_i$,
as for rATL.

%---------- Proof of Lemma 3 ----------
\section{Proof of Lemma~\ref{lemma:ratl-as-expressive-as-rctl}}
\label{ratl-as-expressive-as-rctl}

Recall the statement of Lemma~\ref{lemma:ratl-as-expressive-as-rctl}:
for every \rctl formula~$\varphi$, there exists an \ratl formula $\phi^\star$ such that for every Kripke structure~$\trans$ and every state~$s$ of $\trans$ we have $\eval(s, \phi^\star) = \eval_{\rctl}(s, \phi)$.

We proceed with the proof of Lemma~\ref{lemma:ratl-as-expressive-as-rctl} in three steps, as described in Subsection~\ref{subsec_expressiveness}.

% Step 1
\paragraph{Step 1}
First, we transform the Kripke structure $\trans = (S, I, R, L)$ into an ``equivalent'' concurrent game structure $\struct_\trans$ with a single-agent $a$.
This transformation is required because \rctl formulas are evaluated over Kripke structures, whereas \ratl formulas are evaluated over concurrent game structures.
Formally, we define $\struct_\trans = (\states, \ag, \ac, \tr, \lab)$ with
\begin{itemize}
    \item $\states = S$;
    \item $\ag = \{ a \}$;
    \item $\ac = S$;
    \item $\delta(s, s') = \begin{cases} s' & \text{if $(s, s') \in R$}; \\ s'' & \text{for some $(s, s'') \in R$ if $(s, s') \notin R$};  \end{cases}$
    \item $\ell(s) = L(s)$ for each $s \in S$.
\end{itemize}
Note that actions in $\struct_\trans$ are the states of the Kripke structure, and $\delta$ is well defined because $\trans$ has no dead ends.
Moreover, the transition function $\tr$ mimics the transition relation $R$, except that it (potentially) contains additional parallel edges to make the function complete.
Overall, it is not hard to verify that $\trans$ and $\struct_\trans$ have the same set of paths.

% Step 2
\paragraph{Step 2}
Second, given an \rctl formula $\varphi$, we construct the \ratl formula $\varphi^\star$.
To this end, we use a mapping $t'$ that replaces the path quantifiers $\exists$ and $\forall$ with the corresponding strategy quantifiers $\stratDiamonddot{\{a\}}$ and $\stratDiamond{\emptyset}$, respectively.
For state formulas, we define
\begin{align*}
    t'(p) & \coloneqq p, \\
    t'(\lnot \varphi) & \coloneqq \lnot t'(\varphi), \\
    t'(\varphi_1 \lor \varphi_2) & \coloneqq t'(\varphi_1) \lor t'(\varphi_2) \\
    t'(\varphi_1 \land \varphi_2) & \coloneqq t'(\varphi_1) \land t'(\varphi_2) \\
    t'(\varphi_1 \rightarrow \varphi_2) & \coloneqq t'(\varphi_1) \rightarrow t'(\varphi_2), \\
    t'(\exists \Phi) & \coloneqq \stratDiamonddot{\{a\}} t'(\Phi), \\
    t'(\forall \Phi) & \coloneqq \stratDiamonddot{\emptyset}t'( \Phi). \\
    \intertext{For path formulas, we define}
    t'(\Nextdot \varphi) & \coloneqq \Nextdot t'(\varphi), \\
    t'(\Diamonddot \varphi) & \coloneqq \Diamonddot t'(\varphi), \\
    t'(\Boxdot \varphi) & \coloneqq \Boxdot t'(\varphi).
\end{align*}
Given an \rctl formula $\varphi$, we then simply set $\varphi^\star \coloneqq t'(\varphi)$.

% Step 3
\paragraph{Step 3}
It is left to show that $\eval(s, \varphi^\star) = \eval_{\rctl}(s, \varphi)$ holds for all states $s$.
To this end, we show the following, slightly stronger statement.

\begin{lemma} \label{lem:slightly-stronger-than-lem3}
For every \rctl state formula $\varphi$, every \rctl path formula $\Phi$, every Kripke structure~$\trans$, every state~$s$ of $\trans$, and every path $\pth$ of $\trans$, the transformed formulas $t'(\varphi)$ and $t'(\Phi)$ satisfy the following in the concurrent game structure $\struct_\trans$:
\begin{enumerate}
    \item \label{itm:proof:lem3:1}
    $\eval(s, t'(\varphi)) = \eval_{\rctl}(s, \varphi)$; and
    \item \label{itm:proof:lem3:2}
    $\eval(\pth, t'(\Phi)) = \eval_{\rctl}(\pth, \Phi)$.
\end{enumerate}
\end{lemma}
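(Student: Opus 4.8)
The plan is to prove the two items simultaneously by mutual induction on the structure of the \rctl state formula~$\varphi$ and path formula~$\Phi$. The only external fact I rely on is the conclusion of Step~1, namely that $\trans$ and $\struct_\trans$ have exactly the same set of paths: restricted to the ``genuine'' actions, $\delta$ mimics~$R$, and the additional parallel edges merely duplicate existing successors without creating new ones. Consequently $\mathit{paths}(s)$ computed in $\trans$ coincides with the set of paths starting in~$s$ in $\struct_\trans$, which is precisely what makes it meaningful to compare the two valuation functions along the same objects.

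The base and Boolean/temporal cases I expect to be routine. For $\varphi = p$, the equality $\eval(s, t'(p)) = \eval_{\rctl}(s, p)$ follows immediately from $\ell(s) = L(s)$, since both semantics send~$p$ to $1111$ exactly when $p$ lies in the label of~$s$ and to $0000$ otherwise. For the connectives $\neg, \vee, \wedge, \rightarrow$ and for the temporal operators $\Nextdot, \Diamonddot, \Boxdot$, the defining clauses of $\eval$ and $\eval_{\rctl}$ are literally identical (the same $\min$/$\max$ rules for the connectives and the same bitwise formulas for the temporal operators); hence each case reduces to unfolding the shared definition and applying the induction hypothesis to the immediate subformula(s), with nothing further to argue.

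The crux — and the step I expect to require the most care — is the treatment of the two path quantifiers. Consider $\varphi = \exists \Phi$, so that $t'(\varphi) = \stratDiamonddot{\{a\}} t'(\Phi)$. Because $a$ is the only agent, a strategy for the coalition $\{a\}$ controls every transition, so each outcome set $\outcome(s, \stratset_{\{a\}})$ is a singleton, and conversely every path starting in~$s$ is the outcome of some such strategy (via Step~1). Therefore the ``maximal~$b$ enforceable on all outcomes'' defining $\eval(s, \stratDiamonddot{\{a\}} t'(\Phi))$ collapses to $\max_{\pi} \eval(\pi, t'(\Phi))$, which the induction hypothesis for~$\Phi$ rewrites as $\max_{\pi} \eval_{\rctl}(\pi, \Phi) = \eval_{\rctl}(s, \exists \Phi)$. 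For $\varphi = \forall \Phi$, so that $t'(\varphi) = \stratDiamonddot{\emptyset} t'(\Phi)$, the empty coalition fixes nothing, whence $\outcome(s, \emptyset)$ is the set of all paths from~$s$; thus $\eval(s, \stratDiamonddot{\emptyset} t'(\Phi))$ is the largest~$b$ with $\eval(\pi, t'(\Phi)) \succeq b$ for all such~$\pi$, namely $\min_{\pi} \eval(\pi, t'(\Phi))$, which by the induction hypothesis equals $\min_{\pi} \eval_{\rctl}(\pi, \Phi) = \eval_{\rctl}(s, \forall \Phi)$. The subtle point I must verify carefully in both cases is that the quantifier-over-strategies formulation genuinely degenerates to a plain $\max$ (respectively $\min$) over single paths once the coalition is a singleton (respectively empty), and in particular that the parallel edges in $\struct_\trans$ do not enlarge the outcome sets beyond $\mathit{paths}(s)$. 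Once this identification is established, Lemma~\ref{lemma:ratl-as-expressive-as-rctl} follows at once from Item~\ref{itm:proof:lem3:1} and $\varphi^\star = t'(\varphi)$.
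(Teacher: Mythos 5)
Your proposal is correct and follows essentially the same route as the paper's proof: a mutual structural induction in which the atomic, Boolean, and temporal cases reduce to identical defining clauses, and the two path quantifiers are handled by observing that in the single-agent structure $\struct_\trans$ each outcome set of a $\{a\}$-strategy is a singleton (so $\stratDiamonddot{\{a\}}$ collapses to a $\max$ over $\mathit{paths}(s)$) while $\outcome(s,\emptyset)=\mathit{paths}(s)$ (so $\stratDiamonddot{\emptyset}$ collapses to a $\min$). The subtle points you flag --- the bijective correspondence between strategies and paths, and that the parallel edges do not enlarge the path set --- are exactly the ones the paper's proof relies on.
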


\begin{proof}[Proof of Lemma~\ref{lem:slightly-stronger-than-lem3}]
The base case (i.e., atomic propositions) and the induction step for the Boolean operators ($\lnot, \lor, \land, \rightarrow$), and the robust temporal operators ($\Nextdot, \Diamonddot, \Boxdot$) follow from applying the semantics of \rctl and \ratl as expected.
%\todo[]{which are the same, as explained above. so there is nothing to see here}
Hence, we skip them here and only investigate the remaining two cases for the existential and universal path quantifiers:
\begin{itemize}
    \item Let $\varphi = \exists \Phi$ and, therefore, $t'(\varphi) = \stratDiamond{\{a\}} t'(\Phi)$.
   % \textcolor{blue}{To prove the claim, we show $\eval(\stratDiamonddot{\{a\}} t'(\Phi), s) \succeq \eval_{\rctl}(\exists \Phi, s)$ and $\eval(\stratDiamonddot{\{a\}} t'(\Phi), s) \preceq \eval_{\rctl}(\exists \Phi, s)$}.

    Recall that the paths of $\trans$ and $\struct_\trans$ are identical and $\struct_\trans$ is a concurrent game structure with a single-agent.
    Thus, for every path $\pth \in \mathit{paths(s)}$, there exists a strategy $f_\pth$ such that $\outcome(s, \{ f_\pth \}) = \{ \pth \}$.
    Conversely, every strategy $f_a$ of agent~$a$ produces exactly one path $\pth_{f_a} \in \mathit{paths(s)}$.

    Next, we apply the induction hypothesis and obtain $\eval(\pth, t'(\Phi)) = \eval_{\rctl}(\pth, \Phi)$ for all paths $\pth \in \mathit{paths}(s)$.
    Thus,
    \[ \max_{f_a \in F_a}{\eval(\pth_{f_a}, t'(\Phi))} = b = \max_{\pth \in \mathit{paths}(s)}{\eval_{\rctl}(\pth, \Phi)}, \]
    where $F_a$ is the set of all strategies of agent~$a$ starting in state~$s$.
    This observation implies $\eval(s, \stratDiamonddot{\{ a \}} t'(\Phi)) = b$ by definition of the $\stratDiamonddot{~}$-operator (recall that $\outcome(s, f_a)$ is a singleton set of all strategies $f_a \in F_a$).
    On the other hand, we have $\eval_{\rctl}(s, \exists \Phi) = b$ by definition of the $\exists$-operator,
    In total, we obtain $\eval(s, \stratDiamonddot{\{ a \}} t'(\Phi)) = \eval_{\rctl}(s, \exists \Phi)$, proving the claim.

    \item Let $\varphi = \forall \Phi$ and, therefore, $t'(\varphi) = \stratDiamond{\emptyset} t'(\Phi)$.
    
    Again, recall now that the paths of $\trans$ and $\struct_\trans$ are identical.
    Moreover, note that $\outcome(s, \emptyset) = \mathit{paths}(s)$ because $\struct$ is a concurrent game structure with a single agent.

    Next, we apply the induction hypothesis and obtain $\eval(\pth, t'(\Phi)) = \eval_{\rctl}(\pth, \Phi)$ for all paths $\pth \in \mathit{paths}(s)$, as before.
    Thus,
    \[ \min_{\pth \in \outcome(s, \emptyset)}{\eval(\pth, t'(\Phi))} = b = \min_{\pth \in \mathit{paths}(s)}{\eval_{\rctl}(\pth, \Phi)}. \]
    
    This observation implies $\eval(s, \stratDiamonddot{\emptyset} t'(\Phi)) = b$ by definition of the $\stratDiamonddot{~}$-operator (recall that $\outcome(s, f_a) = \mathit{paths}(s)$).
    On the other hand, we have $\eval_{\rctl}(s, \forall \Phi) = b$ by definition of the $\exists$-operator,
    In total, we obtain $\eval(s, \stratDiamonddot{\emptyset} t'(\Phi)) = \eval_{\rctl}(s, \forall \Phi)$, proving the claim.

    % By definition of the \rctl semantics, we know that $\eval_{\rctl}(\forall \Phi, s) = b$ is the smallest truth value of $\Phi$ on any path in $\mathit{paths}(s)$, say realized on $\pth_\text{min}$.
    % Moreover, we know that $\eval(t'(\Phi), \pth) = \eval_{\rctl}(\Phi, \pth)$ holds for all paths $\pth$ from applying the induction hypothesis.
    % Thus, $b$ is the smallest truth value of $t'(\Phi)$ on any path in $\mathit{paths(s)}$, which is realized on the same path $\pth_\text{min}$.
    
    % Recall now that the paths of $\trans$ and $\struct_\trans$ are identical.
    % Moreover, note that $\outcome(s, \emptyset) = \mathit{paths}(s)$ because $\struct$ is a concurrent game structure with a single agent.
    % In particular, we have $\pth_\text{min} \in \mathit{paths}(s)$.
    % Moreover, we obtain $\eval(\stratDiamonddot{\emptyset} t'(\Phi), s) = b$ since the maximization in the semantics of the $\stratDiamond{~}$-operator now ranges over the entire set $\mathit{paths}(s)$ (i.e., $b$ has to be greater or equal to the value of the path $\pth_\text{min}$). 
    % Hence, $\eval(\stratDiamonddot{\emptyset} t'(\Phi), s) = b = \eval_{\rctl}(\forall \Phi, s)$, proving the claim.\todo{check for same issue as in exists case}
\end{itemize}
This concludes the proof.
\end{proof}

Lemma~\ref{lemma:ratl-as-expressive-as-rctl} now follows immediately from Item~\ref{itm:proof:lem3:1} of Lemma~\ref{lem:slightly-stronger-than-lem3} and the fact $\varphi^\star = t'(\varphi)$.

\section{Proof of Lemma~\ref{lemma:ratlstar2atlstar}}

Recall that we want to prove that for every \ratlstar formula~$\phi$ and every truth value~$t \in \boolfour$, there is an \atlstar formula~$\phi_t$ such that $\eval(s, \phi) \succeq t$ if and only if $\struct, s \models \phi_t$.
Furthermore, we will show function mapping $\phi$ and $t$ to $\phi_t$ is polynomial-time computable.

\begin{proof}[Proof of Lemma~\ref{lemma:ratlstar2atlstar}]
Before we show the inductive construction of $\phi_t$, let us remark that we can restrict ourselves to $t \succ 0000$, as $\eval(s, \phi) \succeq 0000$ is true for every state~$s$ and every formula~$\phi$. Hence, we can pick $\phi_{0000} = p \vee \neg p$ for some atomic proposition~$p$.

In the following, we assume $t \succ 0000$ and define
\begin{itemize}
    \item $p_t = p$,
    \item $(\neg \phi)_t = \neg \phi_t$,
    \item $(\phi_1 \vee \phi_2)_t = (\phi_1)_t \vee (\phi_2)_t$,
    \item $(\phi_1 \wedge \phi_2)_t = (\phi_1)_t \wedge (\phi_2)_t$,
    \item $(\phi_1 \rightarrow \phi_2)_{1111} = \bigwedge_{t \succeq 0000} (\phi_2)_t \vee \neg (\phi_1)_t$ and 
    \item $(\phi_1 \rightarrow \phi_2)_{t} = (\phi_1 \rightarrow \phi_2)_{1111} \vee (\phi_2)_t$ for $t \prec 1111$,
    \item $(\stratDiamonddot{A}\Phi)_t = \stratDiamond{A}\Phi_t$, and
    \item $(\stratBoxdot{A}\Phi)_t = \stratBox{A}\Phi_t$.
\end{itemize}
For Boolean combinations of path formulas, the translation is defined analogously as for state formulas while for temporal operators, the translation is defined as 
\begin{itemize}
    \item $(\Nextdot\Phi)_t = \Next\Phi_t$,
    \item $(\Diamonddot\Phi)_t = \Diamond\Phi_t$, and
    \item $(\Boxdot\Phi)_{1111} = \Box\Phi_{1111}$,
    \item $(\Boxdot\Phi)_{0111} = \Diamond\Box\Phi_{0111}$,
    \item $(\Boxdot\Phi)_{0011} = \Box\Diamond\Phi_{0011}$, and
    \item $(\Boxdot\Phi)_{0001} = \Diamond\Phi_{0001}$.
\end{itemize}
An induction over the construction of $\phi$ shows that $\phi_t$ has the desired properties.
\end{proof}

\section{Proof of Theorem~\ref{thm:ratlstarresults}}

\begin{proof}[Proof of Theorem~\ref{thm:ratlstarresults}]
The upper bounds follow directly from Lemma~\ref{lemma:ratlstar2atlstar} and that fact that \atlstar model-checking and satisfiability are in \twoexptime~\cite{DBLP:journals/jacm/AlurHK02,Schewe}, while the lower bounds follow from the fact that \atlstar (and thus \ctlstar) is a fragment of \ratlstar:
Generalizing the proof of Lemma~\ref{lemma:ratl-as-expressive-as-atl}, given an \atlstar formula~$\phi$, we eliminate all implications, push all negations to the atomic propositions, and then dot every operator, thereby obtaining an \ratlstar formula~$\phi^\star$ such that $
\struct, s\models \phi \text{ if and only if } \eval(s,\phi^\star) = 1111$.

Now, as the \ctlstar satisfiability problem is \twoexptime-complete (even for the fragment without until and release~\cite{meieretal}), the \ratlstar satisfiability problem is \twoexptime-hard. 
To obtain \twoexptime-hardness of the model-checking problem we rely on the reduction from the \ltl realizability problem to the \atlstar model-checking problem~\cite{DBLP:journals/jacm/AlurHK02} and the fact that \ltl realizability is \twoexptime-hard, even for the fragment without until and release~\cite{boxesanddiamonds}.
\end{proof}

\end{document}